\documentclass{article}
\usepackage[utf8]{inputenc}
\usepackage[english]{babel}
\usepackage{amsthm}
\linespread{1.70}
\newtheorem*{remark}{Remark}

\newtheorem{theorem}{Theorem}[subsection]

\newtheorem{lemma}[theorem]{Lemma}
\usepackage{amsmath,amssymb,subfigure}
\usepackage{graphicx}
\usepackage[pdftex,bookmarks=true,
plainpages=true,
pdfpagelabels,
hypertexnames=false,
bookmarksnumbered=false,
breaklinks=true, colorlinks=true,
urlcolor=blue, linkcolor=blue]{hyperref}
\usepackage{comment}
\setlength{\textwidth}{14cm}
\usepackage{float}

\date{}
\begin{document}

\title{A subordinated CIR intensity model with application to Wrong-Way risk CVA}
\author{Cheikh Mbaye $\qquad$ Fr\'ed\'eric Vrins \footnote{Voie du Roman Pays 34, B-1348 Louvain-la-Neuve, Belgium.E-mail: \href{mailto:frederic.vrins@uclouvain.be}{frederic.vrins@uclouvain.be}.}\vspace{0.2cm}\\ Louvain Finance Center (LFIN) \& CORE \footnote{Center for Operations Research and Econometrics.}\vspace{0.2cm}\\ \textit{Universit\'e catholique de Louvain, Belgium}}

\maketitle
\font\dsrom=dsrom10 scaled 1200
\def \indic{\textrm{\dsrom{1}}}
\begin{abstract}
Credit Valuation Adjustment (CVA) pricing models need to be both flexible and tractable. The survival probability has to be known in closed form (for calibration purposes), the model should be able to fit any valid Credit Default Swap (CDS) curve, should lead to large volatilities (in line with CDS options) and finally should be able to feature significant Wrong-Way Risk (WWR) impact. The Cox-Ingersoll-Ross model (CIR) combined with independent positive jumps and deterministic shift (JCIR++) is a very good candidate : the variance (and thus covariance with exposure, i.e. WWR) can be increased with the jumps, whereas the calibration constraint is achieved via the shift. In practice however, there is a strong limit on the model parameters that can be chosen, and thus on the resulting WWR impact. This is because only non-negative shifts are allowed for consistency reasons, whereas the upwards jumps of the JCIR++ need to be compensated by a downward shift. To limit this problem, we consider the two-side jump model recently introduced by Mendoza-Arriaga \& Linetsky, built by time-changing CIR intensities. In a multivariate setup like CVA, time-changing the intensity partly kills the potential correlation with the exposure process and destroys WWR impact. Moreover, it can introduce a forward looking effect that can lead to arbitrage opportunities. In this paper, we use the time-changed CIR process in a way that the above issues are avoided. We show that the resulting process allows to introduce a large WWR effect compared to the JCIR++ model. The computation cost of the resulting Monte Carlo framework is reduced by using an adaptive control variate procedure.
\end{abstract}
\textbf{Keywords:} default intensity, time-changed diffusion, subordinator, credit value adjustment (CVA), wrong-way risk (WWR).

\section{Introduction}
Since the 2008 crisis, regulators suggest financial institutions to pay specific attention to Counterparty Credit Risk (CCR) when valuing Over the Counter (OTC) deals. In this context, CCR
refers to the possibility that the counterparty of the transaction can default before the maturity of the contract. The CCR can be accounted for either by setting up a strong collateralisation agreement, or by charging a Credit Value Adjustment (CVA) to absorb the corresponding expected losses. One of the main challenge when pricing such adjustments is to account for the potential dependency of the exposure with counterparty's credit quality, a phenomenon commonly referred to as wrong-way risk (WWR).

In that respect, a popular framework fitting in the class of reduced-form models is to consider the default intensity to be governed by the CIR++~\cite{BA} or the JCIR++~\cite{BE} process. In essence, the intensity is modeled as a CIR or a jump diffusion CIR (JCIR) dynamics shifted in a deterministic way so as to fit a given CDS term structure. However, this model suffers from an important restriction: the resulting intensity process (including the shift) needs to be positive. Because for tractability reasons the jumps in the JCIR++ model are upwards only, increasing the jump activity (e.g. to increase the implied spread volatility) under the constraint of keeping the survival probability curve unchanged introduces a shift function that tends to be more and more negative. Because negative shifts should be ruled out for consistency reasons, this puts limits on the CIR or JCIR parameters that can be chosen. This will further limit the WWR impact in CVA applications. One way to limit the appearance of shift functions with negative values would be to allow for both upwards and downwards jumps, without affecting the tractability of the model. 

In this paper, we consider the approach of Mendoza-Arriaga and Linetsky \cite{LIN} to model the default intensity as a time-changed CIR in a CVA context. This poses several problems that need to be addressed. First, the time-change process will destroy the potential correlation between the intensity and the exposure increments. As a consequence, this would lead to a weak WWR impact. Second, the time-change approach may also introduce arbitrage opportunities via a forward looking effect. Eventually, even if some techniques exist to deal with WWR in a semi-analytical way (see e.g.~\cite{BF17} and~\cite{FV}), one generally has to rely on Monte Carlo simulations. Standard Monte Carlo methods are known to be computationally intensive. In addition, because of the stochastic clock, the time-change model is very time consuming due to the fact that the simulation is done in a random grid.

Our contribution in this paper is multiple. First, we propose a way to use the time-changed model of Mendoza-Arriaga and Linetsky in the context of CVA avoiding both the correlation destruction and the appearance of arbitrage opportunities. This is achieved by reconstructing the exposure process in a ``synchronous'' way with the intensity, while preserving the original exposure's dynamics. Second, we show via numerical experiments that the corresponding model is indeed able to generate larger WWR CVA figures compared to JCIR++ without facing the inconsistency issue resulting from a negative shift. Eventually, we propose a variance reduction technique based on the {\it adaptive control variate} to reduce the computational cost.

The paper is organized as follows. In Section~\ref{sec2}, we recall the theory of some reduced-form intensity models in the literature of credit risk such as diffusions intensity models and their extensions. In the third section, we introduce the subordinated model combined with a reconstruction of the exposure process avoiding possible arbitrage opportunities resulting from the time-change. The fourth section reviews the basic concepts of CVA computation in the reduced-form setup for the diffusion models and the new subordinated model. In Section~\ref{sect5}, we present the numerical experiments including the comparison of the diffusion models and the time-change model in term of WWR effects and the control variate technique. The last section contains some concluding remarks and perspectives.

\section{Diffusion default intensity Models}
\label{sec2}
Before defining the subordinated model, we recall the definitions of some existing models in the credit risk modelling literature using the intensity approach. In this study, we consider the well known square-root diffusion default intensity models and their extended shifted version SSRD~\cite{BA} and SSRJD~\cite{BE}.
\subsection{The Diffusion intensity Market Model}
We consider a fixed time horizon $T>0$ and a probability space $(\Omega,\mathcal{F}_T,\mathbb{F},\mathbb{Q})$ where $\mathbb{F}=(\mathcal{F}_t)_{0\leq t\leq T}$ is the filtration generated by the vector $\mathbf{W}=(W^B,W^V,W^\perp$). In this setup, $\mathbb{Q}$ represents the risk-neutral probability measure and the components of $\mathbf{W}$ are risk drivers. In particular, $W^B$ governs the dynamics of the risk-free rate $r$, hence that of the bank account num\'eraire:
\begin{equation*}
dB_t = r_tB_tdt,\quad B_0 = 1.
\end{equation*}
Under $\mathbb{Q}$, all prices of tradable assets divided by $B$ are $\mathbb{F}$-martingales between two cash-flow dates. The second Brownian motion $W^V$ drives the dynamics of the portfolio price process 
\begin{equation}
\label{eq5}
dV_t = b(V_t) dt + \sigma (V_t) dW^V_t,\quad V_0>0.
\end{equation}
We assume the coefficients $b,\sigma$ to be regular enough to guarantee that a unique strong solution to this SDE exists. Finally, we model the default time $\tau$ of our counterparty as a random time. It is defined as a first passage time of an increasing stochastic process $\Lambda_t:=\int_0^t\lambda_s ds$, $(\lambda_s)_{s\geq 0}\geq 0$, above a unit-mean exponential random barrier $\mathcal{E}$:

\begin{equation}
\label{eq7}
\tau := \inf\left\{t\geq 0:\,\Lambda_t \geq\mathcal{E}\right\}.
\end{equation}
In this setup, the default intensity $\lambda$ is driven by a  Brownian motion correlated to $W^V$, $W^\lambda:=\rho W^V+\sqrt{1-\rho^2}W^\perp$, $W^\perp \perp W^V$, $\rho\in[-1,1]$, but the threshold $\mathcal{E}$ is independent from $\mathbb{F}$. In such a reduced form setup, the complete filtration $\mathbb{G}=(\mathcal{G}_t)_{0\leq t\leq T}$ is obtained by progressively enlarging $\mathbb{F}$ with $\mathbb{D}=(\mathcal{D}_t)_{0\leq t\leq T}$, the natural filtration of the default indicator $D_t=\indic_{\{\tau \leq t\}}$: $\mathcal{G}_t=\mathcal{F}_t\vee\mathcal{D}_t$ where $\mathcal{D}_t:=\sigma(D_u, 0\leq u\leq t)$. Hence, $\tau$ is a $\mathbb{D}$- and a $\mathbb{G}$-stopping time, but not a $\mathbb{F}$-stopping time. Generally speaking however, $\tau$ and $V$ are related one to another (via $W^V$). And from the Doob-Meyer decomposition of $D$, its $\mathbb{G}$-intensity is given by $\lambda^{\mathbb{G}}_t=(1-D_t)\lambda_t$~\cite{JYC}. Under $\mathbb{F}$, the intensity is simply $\lambda$. Since the random time $\tau$ is constructed through a Cox process, $\mathrm{H}$-\textit{Hypothesis} which state that every $\mathbb{F}$-local martingale is also a $\mathbb{G}$-local martingale holds between the filtrations $\mathbb{F}$ and $\mathbb{G}$, $\mathbb{F}\subset\mathbb{G}$ (from\cite{BJR}, and~\cite{JYC} Proposition $5.9.1.1$ and Remark $7.5.1.2$).

The time-$t$ survival probability to survive up to time $T$ implied by the model is given by 
\begin{equation}
\label{eq8}
P(t,T):=\;\mathbb{Q}\left(\tau > T| \mathcal{G}_t\right)
=\indic_{\{\tau > t\}}\frac{\mathbb{E}[S_T|\mathcal{F}_t]}{S_t}=\indic_{\{\tau > t\}}\frac{G_t(T)}{G_t(t)}
\end{equation}
where $G_t(T):=\mathbb{Q}(\tau > T|\mathcal{F}_t)$ is known as the risk-neutral survival probability in the filtration $\mathbb{F}$ and the \textit{survival process} $S_t:=G_t(t)=\mathbb{Q}(\tau > t|\mathcal{F}_t)$ is the Azéma supermartingale (see~\cite{JLC} for more details). Usually $G_0(T)$ is parametrized as $P^{\rm M}(0,T)=e^{-\int_0^T h(s)ds}$, where $h>0$ is the hazard rate curve prevailing at time $0$. We use the symbol $\mathbb{E}$ to denote expectation under $\mathbb{Q}$.\\
To avoid arbitrage opportunities, one needs to calibrate the model curve to the market curve, i.e. make sure that $P(0,t)=G_0(t)=P^{\rm M}(0,t)$ for all $t>0$.\\
Observe that in the above setup, the survival process takes the simple form $S_t=e^{-\Lambda_t}$. The Azéma supermartingale associated to this type of models has a special Doob-Meyer decomposition: it is decreasing, meaning that the martingale part vanishes. Other types of default models exist for which the martingale part is non-zero, see e.g.~\cite{JV}.

\subsection{The CIR++ and JCIR++ intensity models}
A convenient way to define the intensity process $\lambda$ is to set $\lambda_t=k(X_t)$ where $k$ is a given positive function continuous on $(0,\infty)$ and $X$ follows a Cox-Ingersoll-Ross (CIR) SDE
\begin{equation}
\label{eq6}
dX_t = \kappa(\beta - X_t)dt + \eta \sqrt{X_t}dW^\lambda_t,\quad X_0=x>0.
\end{equation}
By doing so, the intensity process becomes (a function of) a mean-reverting square-root process $X$ with speed of mean reversion $\kappa$, long-term mean $\beta$ and  volatility $\eta$, usually chosen to satisfy the Feller constraint $2\kappa\beta>\eta^2$.

In order to describe the appearance of positive jumps in the default intensity process, we consider the jump-diffusion CIR model (JCIR) defined as
\begin{equation}
\label{eqjcir}
dX_t = \kappa(\beta - X_t)dt + \eta \sqrt{X_t}dW^\lambda_t + dJ_t,\quad X_0=x
\end{equation}
where
\begin{equation}
\label{eqjcomp}
J_t:=\sum_{i=1}^{N_t}Y_i, \quad t\geq 0,
\end{equation}
$N_t$ is a Poison process with intensity $\omega>0$ and $Y_1$, $Y_2,\ldots$ a sequence of identically distributed exponential random variables with mean $1/\alpha$, $\alpha>0$, independent of the Poisson Process $N_t$ and $\mathbf{W}$.

A common choice is to consider $k(x)=x$, in which case the intensity is driven by CIR or JCIR dynamics respectively defined in equations~\eqref{eq6} and \eqref{eqjcir}.
Adding non-negative jumps independent from $W^\lambda$ in the SDE~\eqref{eq6} increases the volatility of the intensity process. These two choices belong to the class of Affine models: the time-$t$ survival probability curve takes the simple form
\begin{equation}
\label{eqab}
P^{{\rm CIR}}(t,T)= \indic_{\{\tau > t\}}A(t,T)e^{-B(t,T)X_t}
\end{equation}
and
\begin{equation}
P^{{\rm JCIR}}(t,T)= \indic_{\{\tau > t\}}\bar{A}(t,T)e^{-\bar{B}(t,T)X_t}
\end{equation}
for some deterministic functions $A,B$, $\bar{A}$ and $\bar{B}$ (see \cite{BM2} for more details). Shifting the process $X$ in a time-dependent way does not affect the above relationship as long as the shift is deterministic but provides full flexibility in terms of calibration capabilities. Therefore, one typically consider $\lambda_t=X_t+\psi(t)$ where $X$ is a CIR or JCIR process and $\psi$ is chosen such that the model and market survival probability curves coincide at inception: $P(0,t)=P^{\rm M}(0,t)$. The corresponding models are know as CIR++ and JCIR++, depending on whether $X$ features jumps or not. The main advantage of adding the shift is that we can fit exactly any term structure of hazard rates and derive analytical formulas both for bonds and European options. In particular, the CIR++ and JCIR++ models remain affine, we just need to replace $A$ by $Ae^{-\int_0^t\psi(s)ds}$ in the CIR model and similarly for $\bar{A}$ in the JCIR model. And the shift $\psi$, at any time $t$, is given by (for both the CIR and JCIR model)
\begin{equation*}
\psi(t) = -\frac{d}{dt}\ln\frac{P^{\rm M}(0,t)}{P(0,t)}.
\end{equation*}
The weakness of this approach is that we can guarantee the positivity of intensities only through restrictions on model parameters such that $\psi\geq 0$. Indeed, $X$ can take values arbitrarily close to zero, so that the condition $\lambda\geq 0$ $\mathbb{Q}$-a.s. is equivalent to saying $\psi\geq 0$. However in general, $\psi$ has to correct the function $P$ so as it sticks, thanks to the shift, to the target function $P^{\rm M}(0,.)$. Given a set of model parameters for $X$, nothing prevents $\psi$ to become negative, in general. But should it take negative values, the resulting model fails to be a Cox-type, and the $\mathrm{H}$-{\it hypothesis} does not hold anymore. This problem is of high importance in practice.
Indeed, the CIR model has low volatility to fit CDS curves, and increasing the volatility just breaks the Feller condition. It is possible to increase the volatility without breaking the Feller constraint using JCIR. However, the affine form of the JCIR model requires the jumps to be independent from the diffusion part, so that positivity allows for upwards jumps only. This of course tends to increase the mean of $\lambda$, and hence to decrease $P^{{\rm JCIR}}$ (the shift can go quickly negative leading to negative intensities which is inconsistent to the Cox model). 
Reciprocally, fitting a given target curve $P^{\rm M}(0,.)$ with non-negative shift only puts constraints on the jump sizes/rates, hence on the attainable volatility.

However, one cannot increase the activity of $J$ without bounds. By doing so indeed, the calibration constraint $P(0,t)=P^{\rm M}(0,t)$ drives the implied shift function $\psi$ downwards. As $\psi$ cannot take negative values, there is a strong limit on the jump rates and/or sizes that one can use while preserving the consistency of the model.\\
In the next section, we propose an alternative model that is less subjected to suffer from the above problems.

\section{Time-change intensity Model}
As explained earlier, the fact that JCIR jumps rapidly pushes $\psi$ downwards results from the fact that the jumps are positive only. This would not be the case if the jumps could go in both directions. Yet, it is not enough just to use symmetric jumps in JCIR++: this would break the positiveness of $X$ if $J$ is independent from $W^\lambda$. 

One possibility consists in modeling $\lambda$ as a time-changed version of a standard intensity process like $X$. On that respect, we define the time-changed CIR (TC-CIR) model by subordinating the CIR process $X_t$ in (\ref{eq6}) with a jump-process
\begin{equation}
\theta_t:=t+J'_t~\footnote{It is possible that $\theta$ is a subordinator with drift $a>0$ (i.e.: $\theta_t=at+J'_t$) but we focus here on the case $a=1$.}
\end{equation}
where $J'$ is a compound Poison process independent of $\mathbf{W}$ defined as in~\eqref{eqjcomp} but with a Poisson process $N'_t$ instead of $N_t$. That is, we define a new process $X^{\theta}$ by $X^{\theta}_t=X_{\theta_t}$ where $\theta$ is the stochastic clock defined above. If the stochastic clock features jumps, the resulting time-changed process would still be positive, and would feature jumps in both directions. This would provide a mean to increase the volatility of $\lambda$ avoiding the implied shift of the time-change model to become negative too quickly as the jumps activity increases. As $X^{\theta}$ is no longer affine, we need to apply the procedure developed by Mendoza-Arriaga  and Linetsky \cite{LIN} to get a closed formula for the survival probability. This approach is a time-changed CIR default intensity by mean of subordination in the sense of Bochner \cite{LIN}. Based on a Cox model, it is analytically tractable by means of eigenfunction expansions of relevant semigroups, yielding closed-form pricing of defaultable zero coupon bonds.

\subsection{The time-changed Market Model}
Consider the corresponding time-changed probability space $(\Omega,\mathcal{F}^\theta_T,\mathbb{F}^\theta,\mathbb{Q})$ with $\mathcal{F}^\theta_t=\mathcal{F}_{\theta_t}$ and $\mathbb{F}^\theta=(\mathcal{F}_{\theta_t})_{t\geq 0}$.
To introduce the time change defaultable market, we consider the default time as defined
in~\eqref{eq7} in order to determine the corresponding intensity of the time-change model. Let's define the corresponding indicator process of $D$ by $D^{\theta}_t:= \indic_{\{\tau\leq \theta_t\}}$, $t\geq 0$. To introduce the time-change filtration, we need first to define an inverse subordinator process $(L_t:=\inf\{s\geq 0:\theta_s>t\}, t\geq 0)$. Let $\mathbb{L}=(\mathcal{L}_t)_{t\geq 0}$ be its completed natural filtration and $\mathbb{H}=(\mathcal{H}_t)_{t\geq 0}$ the enlarged filtration with $\mathcal{H}_t=\mathcal{G}_t\vee\mathcal{L}_t$. We then define our time-changed filtration $\mathbb{H}^{\theta}=(\mathcal{H}^{\theta}_t)_{t\geq 0}$ by $\mathcal{H}^{\theta}_t=\mathcal{H}_{\theta_t}$. Hence, the time-changed bivariate process $(X^{\theta}_t,D^{\theta}_t)_{t\geq 0}$ is $\mathbb{H}^{\theta}$-adapted and c\`adl\`ag and is an $\mathbb{H}^{\theta}$-semimartingale (see \cite{LIN} for details).

In this setup, from the Doob-Meyer decomposition of $D^{\theta}$, our time-changed intensity is (see Theorem $3.3$ (iii) in \cite{LIN}) given by $\lambda^{\mathbb{H}^{\theta}}_t=(1-D^{\theta}_t)\lambda^\theta_t$,
\begin{equation*}
\lambda^\theta_t=k^{\theta}(X^{\theta}_t) \quad \text{with} \quad k^{\theta}(x)=k(x)+\int_{(0,\infty)}\left(1-A(0,s)e^{-B(0,s)x}\right)\nu(ds)
\end{equation*}
where we set $k(x)=x$ (as in the CIR intensity model), $\nu(ds)=\omega\alpha e^{-\alpha s}ds$ and $A,B$ are the same as in~\eqref{eqab}. Hence, if $k^\theta$ is a function from $\mathbb{R}^+$ to $\mathbb{R}^+$ and $X$ is an intensity process, $\lambda^\theta$ defines a new intensity process and can be used to define a new default time using the Cox framework used above:
\begin{equation*}
\tau^\theta := \inf\left\{t\geq 0:\,\int^t_0\lambda^\theta_sds \geq\mathcal{E}\right\}
\end{equation*}
and we have that $\{\tau^\theta\leq t\}\equiv\{\tau\leq\theta_t\}$ $\mathbb{Q}\,$-$\,{\rm a.s.}$, with
\begin{equation*}
D^{\theta}_t= \indic_{\{\tau^\theta\leq t\}}.
\end{equation*}
In this setup, the new time-$t$ survival probability to survive up to time $T$ is
\begin{equation}
P^\theta(t,T):=\;\mathbb{Q}\left(\tau^\theta > T| \mathcal{H}^\theta_t\right)
=\indic_{\{\tau^\theta > t\}}\frac{\mathbb{E}[S^\theta_T|\mathcal{F}^\theta_t]}{S^\theta_t}=\indic_{\{\tau^\theta > t\}}\frac{G^\theta_t(T)}{G^\theta_t(t)}
\end{equation}
where $S^{\theta}_t=\mathbb{Q}(\tau^\theta > t\,|\,\mathcal{F}^{\theta}_t)=e^{-\int^t_0\lambda^\theta_sds}$ is the Azéma supermartingale and $G^{\theta}_t(T) = \mathbb{Q}(\tau^\theta > T\,|\,\mathcal{F}^{\theta}_t)$ the risk-neutral survival probability in the time-change model.

In a multivariate setup in general and in the specific case of CVA application in particular, the time-change approach presents a problem. Indeed, $\lambda^\theta$ is an intensity, which typically can be correlated with other processes (e.g. $V$ and $B$). If we correlate these Brownian motions, two problems arise. First, because of the time change, the correlation between the intensity $\lambda^\theta$ and $(V, B)$ is partially destroyed. Indeed, $V_t$ and $B_t$ depend on $W^V$ and $W^B$ on $[0, t]$ whereas the intensity $\lambda^\theta$ depends on $W^\lambda$ on $[0, \theta_t]$ with $\theta_t\geq t$. This methodology thus impacts negatively the dependence between the processes $\lambda$ and $(B, V)$ as the intensity or the size of jumps of $\theta$ increases.  Another problem, probably even more important, is related to arbitrage opportunities. The knowledge of $\lambda^\theta$ at $t$ contains information on $W^\lambda$ up to $\theta_t$. If $\rho\neq 0$, this introduces a forward looking effect on $V$. It is therefore important to work with a model in which all processes remain synchronized, but without changing the law of $B$ and $V$ as originally specified. To do this, it is enough to rebuild new Brownian motions ($\widetilde{W}^V, \widetilde{W}^B$) so that the increments of $(B, V)$ remain synchronized with those of $\lambda^\theta$.

\begin{lemma}
\label{lem1}
Let $W$ be a Brownian motion and $t_i$ be the time of the $i^{th}$ jump of the Poisson process $N'_t$. Then the process
\begin{equation}
\label{eq12}
\widetilde{W}_t := \sum_{i=0}^{N'_t-1}\int_{t_i\wedge t}^{t_{i+1}^-\wedge t}dW_ {\theta_s} + (W_{\theta_t} - W_{\theta_{t_{N'_t}}})
\end{equation}
is an $\mathbb{F}^{\theta}$-Brownian motion and behaves exactly as $W$ sampled on the time grid.
\end{lemma}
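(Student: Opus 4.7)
The plan rests on one structural observation. Between two consecutive jumps $t_i$ and $t_{i+1}$ of $N'$, the subordinator $\theta_s=s+J'_s$ has constant jump component $J'_s\equiv J'_{t_i}$, so $\theta_s=s+J'_{t_i}$ is a deterministic time-shift of $s$. Consequently the map $s\mapsto W_{\theta_s}$ restricted to $[t_i,t_{i+1})$ is simply $W$ translated in time, and $W_{\theta_{t_{i+1}^-}}-W_{\theta_{t_i}}$ is a genuine Brownian increment of $W$ over the interval $[t_i+J'_{t_i},\,t_{i+1}+J'_{t_i}]$ of length $t_{i+1}-t_i$. Similarly the tail term in~\eqref{eq12} is the $W$-increment over $[t_{N'_t}+J'_{t_{N'_t}},\,t+J'_{t_{N'_t}}]$, of length $t-t_{N'_t}$. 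Since $\theta_{t_{i+1}}=\theta_{t_{i+1}^-}+Y_{i+1}>\theta_{t_{i+1}^-}$, the intervals so obtained are pairwise disjoint and their lengths telescope to $t$: formula~\eqref{eq12} is nothing but the concatenation of the pieces of $W$ actually traced out by the continuous part of $\theta$, discarding the ``excursions'' that $W$ would traverse during the jumps of $\theta$.

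First I would check adaptedness and continuity. Adaptedness is immediate: $\widetilde W_t$ depends only on $W_u$ for $u\leq\theta_t$ and on $\theta$ on $[0,t]$, hence is $\mathcal{F}_{\theta_t}=\mathcal{F}^\theta_t$-measurable. For continuity the only points to inspect are the jump times of $N'$: at $t=t_j^-$ the tail term equals $W_{\theta_{t_j^-}}-W_{\theta_{t_{j-1}}}$, whereas at $t=t_j$ that very quantity has migrated into the sum as the index-$(j-1)$ term while the new tail $W_{\theta_{t_j}}-W_{\theta_{t_j}}$ vanishes, so left and right limits agree.

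The heart of the argument is Lévy's characterization. Conditioning on $\sigma(J')$, which by construction is independent of $W$, the random grid $\{t_i,\theta_{t_i}\}$ becomes deterministic and for any $s<t$ the increment $\widetilde W_t-\widetilde W_s$ is a sum of $W$-increments over disjoint deterministic intervals of total length $t-s$. By Gaussianity and independence of disjoint Brownian increments, this sum is centered Gaussian with variance $t-s$ and is independent of the $W$-pieces used to build $\widetilde W_s$. Integrating back over $\sigma(J')$ via Fubini, and using $W\perp J'$, preserves both the Gaussian law and the independence of increments. Combined with $\mathbb{F}^\theta$-adaptedness and continuity this yields the $\mathbb{F}^\theta$-Brownian motion property. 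The parenthetical claim that $\widetilde W$ ``behaves exactly as $W$ sampled on the time grid'' is then transparent: $\widetilde W_t$ is literally the concatenation of pieces of $W$ of total length $t$.

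The main obstacle will be the careful bookkeeping of the first paragraph, i.e.\ verifying disjointness of the intervals $[\theta_{t_i},\theta_{t_{i+1}^-}]$ and showing that the ``gaps'' $[\theta_{t_{i+1}^-},\theta_{t_{i+1}}]$ are precisely what guarantees independence of non-overlapping increments of $\widetilde W$. Once this is in place, the rest is a standard application of Lévy's theorem. An equivalent route would be to prove directly that $\widetilde W$ is a continuous $\mathbb{F}^\theta$-local martingale with $\langle\widetilde W\rangle_t=t$, computing the quadratic variation piecewise between jumps and invoking optional stopping at the bounded stopping times $t_i\wedge T$.
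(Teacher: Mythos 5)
Your proposal is correct, and it is considerably more explicit than the argument in the paper. The paper's proof is a two-line invocation of L\'evy's characterization: it asserts that $\widetilde{W}$ is a continuous $\mathbb{F}^\theta$-local martingale with $\widetilde{W}_0=0$ and $\langle \widetilde{W}\rangle_t=t$, and concludes. The ``equivalent route'' you sketch in your last sentence (quadratic variation computed piecewise between jumps, then L\'evy) is precisely what the paper does, except that the paper does not carry out the piecewise computation at all. Your main route is genuinely different in execution: you condition on $\sigma(J')$ to freeze the grid, observe that between jumps $\theta_s=s+J'_{t_i}$ is a deterministic time-shift so that each summand in \eqref{eq12} is a bona fide increment of $W$ over an interval of length $t_{i+1}-t_i$, check that the gaps $[\theta_{t_{i+1}^-},\theta_{t_{i+1}}]$ make these intervals pairwise disjoint with lengths telescoping to $t$, and then verify the independent stationary Gaussian increments directly before integrating out $J'$ using $W\perp J'$. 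This buys you two things the paper leaves implicit: (i) the continuity check at the jump times $t_j$, where the tail term migrates into the sum (without this, L\'evy's theorem does not even apply), and (ii) a transparent justification of the claim that $\widetilde{W}$ ``behaves exactly as $W$ sampled on the time grid,'' which in the paper is asserted without proof. One small point of hygiene: what you prove in the core paragraph is the defining property of Brownian motion (independent Gaussian increments plus continuity), not L\'evy's characterization per se; the label is harmless but slightly misleading. A second point, which affects the paper as much as your write-up, is the adaptedness step: for $\widetilde{W}_t$ to be $\mathcal{F}^\theta_t$-measurable one needs $\mathcal{F}_{\theta_t}$ to carry the information about the clock $\theta$ on $[0,t]$, which is only automatic if $\mathbb{F}$ is enlarged by the filtration of $J'$ (the paper quietly does this later via $\mathbb{H}^\theta$); your conditioning argument in fact delivers independence from the larger $\sigma$-field $\mathcal{F}_{\theta_s}\vee\sigma(J')$, so your proof survives this enlargement intact.
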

\begin{proof}
$\widetilde{W}_t$ is a continuous $\mathbb{F}^{\theta}$-local martingale with $\widetilde{W}_0=0$ and for every $t\geq 0$, $\langle \widetilde{W}\rangle_t=t$. By the Lévy's characterization theorem, the process $(\widetilde{W}_t)_{t\geq 0}$ is an $\mathbb{F}^{\theta}$-Brownian motion.
\end{proof}
The dynamics of $V$ can thus be equivalently described in terms $\widetilde{W}^V$ setting $W=W^V$ on $\mathbb{F}^{\theta}$ as:
\begin{equation}
\label{eq12bis}
d\widetilde{V}_t= b(\widetilde{V}_t) dt + \sigma (\widetilde{V}_t) d\widetilde{W}^V,\quad \widetilde{V}_0=V_0.
\end{equation}
Applying the same procedure of Lemma \ref{lem1} on $W^B$, the corresponding copy of $W^B$ on the time changed grid is $\widetilde{W}^B$ which will govern the new dynamics of the risk-free rate $\widetilde{r}$ leading to that of the bank account numéraire given by
\begin{equation*}
d\widetilde{B} = \widetilde{r}_t\widetilde{B}dt, \quad \widetilde{B}_0=1.
\end{equation*}
\begin{remark} It is worth stressing the fact that keeping $W^V $ as the driver of the exposure process $V$ (instead of $\widetilde{W}^V$) does not completely destroy the WWR effect resulting from the correlation $\rho$ between $W^V,W^\lambda$. The reason is that even if the \textit{instantaneous correlation between the infinitesimal increments} of $\lambda^\theta$ and $V$ are mutually independent as from the first jump of $\theta$, the correlation between $\lambda^\theta_t$ and $V_t$ is non-zero for any $t>0$ whenever $\rho\neq 0$. This is because these processes depend on the integrals of increments of Brownian motions on some time intervals. Between two jumps of the clock in particular, the Brownian increments driving the change in the intensity process $\lambda^\theta$ can be independent from the Brownian increments driving the exposure $V$ on a same time period, but the increments of the first Brownian motion on a given time interval can be dependent on the second Brownian motion on \textit{another} interval. This explains that two processes $\lambda^\theta$ and $V$ can be dependent on each other even if the instantaneous correlation of their increments vanishes because of the time-change. By contrast, the correlation between $\lambda^\theta_t$ and $V_t$  is lower than that of $\lambda^\theta_t$ and $\widetilde{V}_t$: by synchronizing the changes of the Brownian motion driving $V$ with the one driving $\lambda^\theta$, one maximizes the attainable correlation. For instance, let $W,B$ be two Brownian motions with instantaneous correlation $\rho$, i.e. $d\langle W,B\rangle_t=\rho dt$ and define $B^\delta$ as $B^\delta_t=B_{t+\delta}$ where $\delta>0$. The correlation between the increments of $W$ and $B$ on the interval $[s,t]$ is $\rho$ whereas that between $W$ and $B^\delta$ is $\rho (t-(s+\delta))^+/(t-s)$ whose absolute value is no greater than $|\rho|$.
\end{remark}
\begin{theorem}
\label{th1}
Discounted payoffs driven by $\widetilde{V}$ and $\widetilde{B}$ are $\mathbb{H}^\theta$-martingales under $\mathbb{Q}$.
\end{theorem}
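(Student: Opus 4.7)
The plan is to transport the classical ``discounted payoff is a martingale'' argument from the original filtration $\mathbb{F}$ to the time-changed filtration $\mathbb{F}^\theta$, and then enlarge to $\mathbb{H}^\theta$ via an H-hypothesis argument analogous to the one recalled in Section~\ref{sec2}.

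First I would invoke Lemma~\ref{lem1} (applied to $W=W^V$ and $W=W^B$) to obtain $\widetilde{W}^V$ and $\widetilde{W}^B$ as $\mathbb{F}^\theta$-Brownian motions. A small additional verification is needed here: the instantaneous correlation structure between $W^V$ and $W^B$ must be preserved under the construction~\eqref{eq12}, which follows because on each interval between two jumps of $N'$ the bracket $\langle \widetilde{W}^V,\widetilde{W}^B\rangle$ inherits its increment from the bracket of $(W^V,W^B)$ evaluated on the image interval under $\theta$, and the jump increments $W_{\theta_t}-W_{\theta_{t_{N'_t}^-}}$ are likewise correlated in the same way. Thus $(\widetilde{V},\widetilde{B})$ as defined by~\eqref{eq12bis} satisfies, in the filtration $\mathbb{F}^\theta$, exactly the same SDE system as $(V,B)$ does in $\mathbb{F}$.

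Second, because the coefficients $b,\sigma$ and the short-rate map are unchanged, the Itô calculation that makes a discounted payoff $\pi(\widetilde{V}_t)/\widetilde{B}_t$ an $\mathbb{F}^\theta$-local martingale is formally identical to the one valid in the original $\mathbb{F}$-setup: the drift of $d(\pi(\widetilde{V})/\widetilde{B})$ cancels thanks to $d\widetilde{B}_t=\widetilde{r}_t\widetilde{B}_tdt$ and the martingale representation of the risk-neutral dynamics of $\widetilde{V}$. Under the standard integrability assumption on $\pi$ (inherited from the assumed regularity of $b,\sigma$) this local martingale is in fact an $\mathbb{F}^\theta$-true martingale.

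Finally, I would upgrade from $\mathbb{F}^\theta$ to $\mathbb{H}^\theta$. By construction, $\tau^\theta$ is a Cox time built from the $\mathbb{F}^\theta$-adapted intensity $\lambda^\theta$ together with the exogenous exponential threshold $\mathcal{E}$ independent of $\mathbb{F}^\theta$; the filtration $\mathbb{H}^\theta$ is the progressive enlargement of $\mathbb{F}^\theta$ by the default indicator $D^\theta$ together with the (independent) clock information $\mathbb{L}$. The standard Brémaud--Yor / Jeanblanc--Yor--Chesney argument cited earlier (\cite{BJR} and \cite{JYC}, Prop.~5.9.1.1 and Rem.~7.5.1.2) therefore applies verbatim in the time-changed setting, so the H-hypothesis $\mathbb{F}^\theta\subset \mathbb{H}^\theta$ holds and every $\mathbb{F}^\theta$-(local) martingale remains a $\mathbb{H}^\theta$-(local) martingale. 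Applied to $\pi(\widetilde{V})/\widetilde{B}$ this yields the stated conclusion.

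The main obstacle I anticipate is the last step: one has to check that the enlargement by $\mathbb{L}$ (which carries information about future jumps of the clock) does not spoil the H-hypothesis. The clean way to handle this is to note that $\mathbb{L}$ and $\mathbb{D}^\theta$ are both independent of $\mathbf{W}$ under $\mathbb{Q}$, so the enlargement can be viewed as independent of the driving Brownian motions of $(\widetilde{V},\widetilde{B})$; the Itô decomposition of the discounted payoff then remains a martingale decomposition in the enlarged filtration. The rest of the proof is bookkeeping on the $\mathbb{F}^\theta$-Brownian property established in Lemma~\ref{lem1}.
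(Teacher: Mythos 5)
Your proposal follows essentially the same route as the paper: use Lemma~\ref{lem1} to conclude that $(\widetilde{V},\widetilde{B})$ have under $\mathbb{F}^\theta$ the same dynamics as $(V,B)$ under $\mathbb{F}$, hence their discounted payoffs are $(\mathbb{Q},\mathbb{F}^\theta)$-martingales, and then pass to $\mathbb{H}^\theta$ via the immersion property of the Cox construction together with the independence of the clock information. The only differences are cosmetic — you spell out the preservation of the $(W^V,W^B)$ correlation and treat the enlargement by $\mathbb{L}$ more explicitly than the paper's parenthetical remark that $\mathcal{L}_{\theta_t}=t$ — so the argument is correct and matches the paper's proof.
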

\begin{proof}
We know that the discounted payoffs driven by $V$ and $B$ are $(\mathbb{Q}, \mathbb{F})$-martingales, hence they are $(\mathbb{Q}, \mathbb{G})$-martingales due to the $\mathrm{H}$-\textit{Hypothesis}. By construction, $\widetilde{V}$ and $\widetilde{B}$ have the same $\mathbb{F}^\theta$ dynamics as $V$ and $B$ under $\mathbb{F}$ (see Lemma \ref{lem1}). Hence,  the discounted payoffs driven by $\widetilde{V}$ and $\widetilde{B}$ are $(\mathbb{Q},\mathbb{F}^\theta)$-martingales. Because $\tau^\theta$ is modelled with a Cox process, immersion holds and they remains martingales when progressively enlarging $\mathbb{F}^\theta$ with $\tau^\theta$. This shows that they are $(\mathbb{Q},\mathbb{G}^\theta)$-martingales, hence $(\mathbb{Q},\mathbb{H}^\theta)$-martingales (since $\mathcal{L}_{\theta_t}=t$).
\end{proof}

As explained earlier, time-changing the intensity can introduce a forward-looking effect and thus arbitrage opportunities. Indeed, prices of non-dividend paying asset discounted at the risk-free rate would not be a martingale under $\mathbb{Q}$ in the natural filtration generated by $(W^V,W^{\lambda^\theta})$ when $\rho\neq 0$. This is formalized in the next lemma proven in the Appendix.
\begin{lemma}
\label{cor1}
Suppose that $V$ is a martingale with respect to $\mathbb{F}^{W^V}$, the natural filtration of $W^V$. It is also a martingale with respect to $\mathcal{F}^{W^V}\vee\mathcal{F}^{W^\lambda}$. We note $W^{\lambda^\theta}$ the time-changed version of $W^\lambda$. For instance suppose that for some $s\in[0,t]$, we have $s<\theta_s<t$. Then, $X$ may not be a $\mathcal{F}^{W^V}\vee\mathcal{F}^{W^{\lambda^\theta}}$-martingale.
\end{lemma}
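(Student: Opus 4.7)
The plan is to exhibit an explicit counter-example, exploiting the forward-looking effect flagged informally in the preceding paragraph: whenever $\rho\neq 0$ and $\theta_s>s$, the observation $W^{\lambda^\theta}_s=W^\lambda_{\theta_s}=\rho W^V_{\theta_s}+\sqrt{1-\rho^2}W^\perp_{\theta_s}$ delivers a noisy Gaussian snapshot of the strictly future value $W^V_{\theta_s}$ through the correlation $\rho$. Once this is recognized, the failure of the martingale property follows from a routine projection computation.

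First I would specialize to $V_t=W^V_t$, which is trivially a martingale in its own natural filtration and hence also in $\mathcal{F}^{W^V}\vee\mathcal{F}^{W^\lambda}$ by independence of the extra noise $W^\perp$. Then, on the event $\{s<\theta_s<t\}$, I would decompose
\begin{equation*}
W^V_t-W^V_s=(W^V_{\theta_s}-W^V_s)+(W^V_t-W^V_{\theta_s}),
\end{equation*}
noting that the second increment is independent of $\mathcal{F}^{W^V}_s$, of $W^V_{\theta_s}$ and of $W^\perp_{\theta_s}$, and therefore contributes nothing to the conditional expectation. Conditioning additionally on $\{\theta_s=u\}$ for a fixed $u\in(s,t)$, I would apply the Gaussian projection formula to the pair $Z:=W^V_u-W^V_s\sim\mathcal{N}(0,u-s)$ and $Y:=W^\perp_u\sim\mathcal{N}(0,u)$ (independent) to obtain an explicit linear predictor of $Z$ given the signal $\rho Z+\sqrt{1-\rho^2}Y=W^{\lambda^\theta}_s-\rho W^V_s$, with coefficient proportional to $\rho$. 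Taking expectations back through the tower property produces a non-zero conditional expectation for $W^V_t-W^V_s$ on a set of positive probability, contradicting the martingale property under $\mathcal{F}^{W^V}\vee\mathcal{F}^{W^{\lambda^\theta}}$.

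The hard step is of a measurability flavour: the lemma conditions on the natural filtration of the path of $W^{\lambda^\theta}$, not on $\theta$ itself. The jump times of $\theta$ can be read off from the jump times of $W^{\lambda^\theta}$, but the jump sizes of $\theta$ appear only convolved with Brownian increments, so conditioning on $\theta_s$ is genuinely richer than conditioning on $\mathcal{F}^{W^{\lambda^\theta}}_s$. I would sidestep this by choosing the jump-size distribution of $J'$ to be degenerate at some $h\in(0,t-s)$; then on the event that exactly one jump of $N'$ has occurred in $[0,s]$, one has $\theta_s=s+h$ deterministically and $\mathcal{F}^{W^{\lambda^\theta}}_s$-measurable, and the argument above applies verbatim. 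An alternative route, avoiding the degeneracy assumption, is to verify that conditioning on the strictly richer filtration $\mathcal{F}^{W^V}_s\vee\sigma(\theta_u,u\le s)\vee\mathcal{F}^{W^{\lambda^\theta}}_s$ already breaks the martingale property, and then appeal to the monotonicity of conditional variance (equivalently, to the fact that the optional projection onto a coarser filtration of a non-constant process is still non-constant) to conclude.
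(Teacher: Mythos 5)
Your argument runs on the same engine as the paper's proof: when $\rho\neq 0$ and $\theta_s>s$, the observation $W^{\lambda^\theta}_s=W^\lambda_{\theta_s}$ is a noisy Gaussian signal about the future increment $W^V_{\theta_s}-W^V_s$, and the Gaussian projection of $Z$ onto $\rho Z+\sqrt{1-\rho^2}\,Y$ yields a non-zero predictor proportional to $\rho$. The differences are in the test process and in the bookkeeping. The paper takes the exponential martingale $V_t=V_s\,e^{-\frac{\sigma^2}{2}(t-s)+\sigma(W^V_t-W^V_s)}$ and must therefore compute a conditional moment generating function of $W^\perp_{\theta_s}-W^\perp_s$ given $\mathcal{F}^{W^\lambda}_{\theta_s}$, ending with an explicit expression different from $V_s$; you take $V=W^V$ and only need a conditional mean, which is shorter and equally conclusive. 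Two remarks. First, a minor slip: the unknown noise in your projection should be $W^\perp_{\theta_s}-W^\perp_s\sim\mathcal{N}(0,\theta_s-s)$ rather than $W^\perp_{\theta_s}\sim\mathcal{N}(0,\theta_s)$, since $W^\perp_s=(W^\lambda_s-\rho W^V_s)/\sqrt{1-\rho^2}$ is measurable for the conditioning $\sigma$-algebra; this changes only the projection coefficient, not its non-vanishing. Second, on the measurability issue you flag: the paper sidesteps it entirely by defining $\mathcal{F}^{W^{\lambda^\theta}}_s:=\mathcal{F}^{W^\lambda}_{\theta_s}$, so you are being more careful than the source. Your degenerate-jump-size device is sound, and in fact the logic already works in your favour without it: $\sigma(W^{\lambda^\theta}_s)\vee\mathcal{F}^{W^V}_s$ is a sub-$\sigma$-algebra of the conditioning $\sigma$-algebra, so a non-zero conditional expectation given the \emph{smaller} $\sigma$-algebra forces, by the tower property, a non-zero conditional expectation given the larger one on a set of positive probability. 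What is \emph{not} valid is your stated alternative route: it is false that the projection of a non-constant conditional expectation onto a coarser filtration remains non-constant --- indeed $\mathbb{E}\left[V_t\,\middle|\,\mathcal{F}^{W^V}_s\vee\mathcal{F}^{W^\lambda}_{\theta_s}\right]\neq V_s$ while $\mathbb{E}\left[V_t\,\middle|\,\mathcal{F}^{W^V}_s\right]=V_s$, which is precisely such a collapse. Keep the main computation and the degenerate-jump (or tower-property-from-below) justification, and drop the ``monotonicity of conditional variance'' alternative.
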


\subsection{The time-changed CIR++ intensity model}
To define the shifted time-change CIR (TC-CIR++) model, we need to have a closed form of the survival probability of the TC-CIR model. For that, we only need to compute the Laplace transform of our time-change process $\theta$ in order to apply the idea devised in \cite{LIN}.

\subparagraph{Laplace transform of a Lévy subordinator:}

Our time-change jump-process $\theta$ is a Lévy subordinator and its Laplace transform can be found easily using the Lévy-Khintchine formula~\cite{DA}. For any $u\in \mathbb{R}$, the Laplace transform of $\theta$ reads
\begin{equation*}
\mathbb{E}[e^{-u\theta_t}]=\mathbb{E}[e^{-u(t+J_t)}]= e^{-u t}\mathbb{E}[e^{-u J_t}]\;.
\end{equation*}
From the exponential formula, a Lévy-Khintchine formula representation of the compound Poisson process yields
\begin{equation*}
\mathbb{E}[e^{-u J_t}]=e^{-t\varphi(u)}
\end{equation*}
where $\varphi$ is the Lévy exponent given by
\begin{equation*}
\varphi(u)=\int_{\mathbb{R}` \{0\}}(1-e^{-u s})\omega\alpha e^{-\alpha s}ds\indic_{\{s>0\}}=\int_{(0,\infty)}\omega\alpha(1-e^{-u s}) e^{-\alpha s}ds.
\end{equation*}
It comes that
\begin{equation*}
\mathbb{E}[e^{-u\theta_t}]=e^{-t\phi(u)}~~,~~\phi(u)=u\left(\frac{u+\alpha+\omega}{u+\alpha}\right).
\end{equation*}
Knowing $\phi$, the time-changed survival probability takes the closed form
\begin{equation*}
P^{\theta}(t,T)=\indic_{\{\tau^\theta>t\}}\sum^{\infty}_{n=1}e^{-\phi(\lambda_n)(T-t)}f_n(0)\varphi_n(X^{\theta}_t)\\
\end{equation*}
where $\lambda_n$, $f_n$ and $\varphi_n$ are given in \cite{LIN}.

The TC-CIR++ model is obtained by defining the time-changed intensity process as $\lambda^{\theta}_t = k^{\theta}(X^{\theta}_t)+\psi^\theta(t)$ and finding $\psi^\theta$ such that $P^{\theta}(0,T)=G^{\theta}_0(T)=P^{\rm M}(0,T)$. Hence
\begin{equation*}
\psi^\theta(t) = -\frac{d}{dt}\ln\frac{P^{\rm M}(0,t)}{P^\theta(0,t)}.
\end{equation*}

\section{CVA in a reduced form setup}
\label{sect4}
The reduced form approach relies on a change of filtrations. In this section, we derive the CVA formulas in both cases where the default intensity is given by the square-root diffusions or the time-change model.

CVA attempts to measure the expected loss due to missing the remaining payments of the OTC portfolio. Its mathematical expression is given in a risk-neutral pricing framework based on a no-arbitrage setup. Let $R$ be the recovery rate of the counterparty and the exposure processes $V$ and $\widetilde{V}$ respectively given by~\eqref{eq5} and~\eqref{eq12bis}. For the mathematical proof of the following CVA expressions, we refer to~\cite{JYC}.
\subsection{CVA formula in the diffusion model}
From the $\mathrm{H}$-\textit{Hypothesis}, payoffs driven by $V$ and $B$ are $\mathbb{G}$-martingale. In addition, since $V/B$ is $\mathbb{Q}$-integrable and $\mathbb{F}$-predictable, assuming $\tau>0$ and deterministic recovery rate $R$, the time-$t$ CVA expression reads

\begin{equation}
\label{eq2}
\mathrm{CVA}_t = \indic_{\{\tau>t\}}B_t\mathbb{E}\left[(1-R)\frac{V^+_{\tau}}{B_\tau}\indic_{\{\tau\leq T\}} \bigg | \mathcal{G}_t\right]=-\indic_{\{\tau>t\}}\frac{B_t}{S_t}\mathbb{E}\left[(1-R)\int_t^T\frac{V^+_u}{B_u} dS_u\bigg |\mathcal{F}_t\right]\;.
\end{equation}
Discretizing the integral with a numerical scheme, the Monte-Carlo estimation of time-$0$ CVA becomes
\begin{equation}
\label{eq2bis}
\widehat{\mathrm{CVA}}_0 := -(1-R)\frac{1}{m}\sum_{i=1}^m\sum_{k=1}^n \frac{V^{+,(i)}_{t_k}}{B^{(i)}_{t_k}}\Delta S^{(i)}_{t_k},\quad n=\frac{T}{\delta}
\end{equation}
where $\Delta S^{(i)}_{t_k}=S^{(i)}_{t_k}-S^{(i)}_{t_{k-1}}$. The right-hand side results from Monte Carlo approximation, by taking the sample mean of $m$ time-integrals discretized in $n$ intervals of length $\delta$.

In the specific case where $\tau$ is independent from the discounted exposure (i.e. $\rho=0$), the {\it independent} CVA formula is given by
\begin{equation}
\label{eq4}
\mathrm{CVA}_0^{\perp} = -(1-R)\int_0^T\mathbb{E}\left[\frac{V^+_u}{B_u}\right]dG_0(u).
\end{equation}
In other words, CVA only depends \textit{separately} on the expected discounted exposure $\mathbb{E}\left[\frac{V^+_u}{B_u}\right]$ and the prevailing mrisk-neutral survival probability curve $G_0(.)$. We refer to~\cite{BF17} for more details.

Generally speaking however the later expression does not hold, and CVA depends on the \textit{joint dynamics} of the exposure and credit worthiness. We cannot get to such a simpler formula as in (\ref{eq4}) and we have to take in account the dependency between credit and exposure. Wrong way risk (WWR) is the additional risk related to this dependency.

\subsection{CVA formula in the time-change model}
As $\widetilde{V}/\widetilde{B}$ is $\mathbb{Q}$-integrable and $\mathbb{F}^{\theta}$-predictable, assuming $\tau^\theta>0$ and using Theorem \ref{th1}, in a similar ways as before, but using $(\widetilde{V},\widetilde{B})$ instead of $(V,B)$ (having the same dynamics under $\mathbb{F}^\theta$ as $(V,B)$ under $\mathbb{F}$), but using the intensity of $\tau^\theta$ with Azéma supermartingale $S^\theta$, we have
\begin{equation}
\label{eqCVATC}
\begin{aligned}
\mathrm{CVA}_t&=\indic_{\{\tau^\theta>t\}}\widetilde{B}_t\mathbb{E}\left[(1-R)\frac{\widetilde{V}^+_{\tau}}{\widetilde{B}_{\tau}}\indic_{\{\tau^\theta\leq T\}} \bigg |  \mathcal{H}^{\theta}_t \right]\\
&=\indic_{\{\tau^\theta>t\}}\widetilde{B}_t\mathbb{E}\left[(1-R)\frac{\widetilde{V}^+_{\tau}}{\widetilde{B}_{\tau}}\indic_{\{\tau^\theta\leq T\}} \bigg |  (\tau^{\theta}\leq t)\vee\mathcal{F}^{\theta}_t \right]\\
&=-\indic_{\{\tau^\theta>t\}}\,\frac{\widetilde{B}_t}{S^{\theta}_t}\,\mathbb{E}\left[(1-R)\int_t^T \frac{\widetilde{V}^+_u}{\widetilde{B}_u}dS^{\theta}_u \bigg |  \mathcal{F}^{\theta}_t \right]\;.
\end{aligned}
\end{equation}
Therefore, the time-$0$ CVA in the time-changed model can be approximated using $m$ paths of Monte Carlo simulations as
\begin{equation}
\label{eq13}
\widehat{\mathrm{CVA}}_0 := -(1-R)\frac{1}{m}\sum_{i=1}^m\sum_{k=1}^n\frac{\widetilde{V}^{+,(i)}_{t_k}}{\widetilde{B}_{t_k}}\Delta S^{\theta,(i)}_{t_k},\quad n=\frac{T}{\delta}\;.
\end{equation}
In the specific case where $\tau^\theta$ is independent from the discounted exposure, we obtain the {\it independent} CVA formula in the time-changed model
\begin{equation}
\label{eq4bis}
\mathrm{CVA}_0^{\perp} = -(1-R)\int_0^T\mathbb{E}\left[\frac{\widetilde{V}^+_u}{\widetilde{B}_u}\right]dG^{\theta}_0(u)\;.
\end{equation}
Observe that by construction of $\psi$ and $\psi^\theta$, $G_0(t)=G^\theta_0(t) = P^{\rm M}(0,t)$. Hence, ${\rm CVA}_0^\perp$ agrees in either models under the calibration constraint. This means
\begin{equation}
\label{eqInd}
\mathrm{CVA}^{\perp}_0 = -(1-R)\int_0^T\mathbb{E}\left[\frac{V^+_u}{B}_u\right]dP^{\rm M}(0,u)=-(1-R)\int_0^T\mathbb{E}\left[\frac{\widetilde{V}^+_u}{\widetilde{B}_u}\right]dP^{\rm M}(0,u).
\end{equation}

\section{Numerical experiments}
\label{sect5}
In this section, we start by defining the simulation procedure of the bivariate process $(X^\theta,\widetilde{V})$. The CVA is computed using standard Monte Carlo simulation and the performance of the shifted time-changed model in term of WWR is compared to the CIR++ and JCIR++ stochastic intensity models. For the sake of simplicity, the recovery rate $R$ and the interest rate $r$ are assumed to be constant and set to zero (i.e. $R=0$ and $B=\widetilde{B}=1$) to put the focus and the treatment of the credit-exposure dependency.
\subsection{Simulation procedure of ($X^\theta, \widetilde{V}$)}
 Let's denote by $\mathcal{T}=\left\lbrace 0,\delta,2\delta,\ldots,T\right\rbrace$ the time-$t$ grid and $\mathcal{T}^{\theta}=\{0,\theta_\delta,\theta_{2\delta},\ldots,\theta_T\}$ the grid at time $\theta_t$. To refine the grid $\mathcal{T}^{\theta}$, we define grids $\mathcal{T}^{\theta}_i,i=1,\ldots,n$, as\\
\begin{equation*}
\mathcal{T}^{\theta}_i:=\overset{n_i-1}{\underset{j=1}{\bigcup}}\left\lbrace\theta_{t_{i-1}} + j\frac{\Delta\theta_{t_i}}{n_i} \right\rbrace,\quad n_i=\left\lceil\frac{\Delta\theta_{t_i}}{\delta}\right\rceil\quad,\quad \mathcal{T}^{\theta}_{fine}:=\mathcal{T}^{\theta}\bigcup\left\lbrace\ \overset{n}{\underset{i=1}{\bigcup}}\mathcal{T}^{\theta}_i \right\rbrace
\end{equation*}
The simulation grid $\mathcal{T}^{\theta}_{fine}$ contains $\mathcal{T}^{\theta}$ and is completed in such a way that (after sorting), the step between two consecutive points is no greater than the chosen time step $\delta$ to keep control on the discretization error independently of the jump sizes of $\theta$.\\
To simulate ${X}^{\theta}$, we simulate the CIR process $X$ in (\ref{eq6}) on $\mathcal{T}^{\theta}_{fine}$ using Diop's scheme~\cite{DI}:
\begin{equation*}
\bar{X}_{(i+1)\delta} = \bar{X}_{i\delta} + \kappa(\beta - \bar{X}_{i\delta}^+)\delta + \eta\sqrt{\bar{X}_{i\delta}^+}(W^\lambda_{(i+1)\delta} - W^\lambda_{i\delta}),\quad \bar{X}_0=X_0,\quad i=0,\ldots,n-1.
\end{equation*}
${X}^{\theta}$ is obtained by extracting in $\mathcal{T}^{\theta}_{fine}$ the corresponding values of $X$ on the grid $\mathcal{T}^{\theta}$.\\
To obtain $\widetilde{V}$, we simulate $W^V$ on $\mathcal{T}^{\theta}_{fine}$, extract its corresponding values on $\mathcal{T}^{\theta}$ and use respectively (\ref{eq12}) and (\ref{eq12bis}).

\subsection{Numerical results}
\label{subsec1}
In this section, we compare the performances of the three models studied above in terms of WWR impact for a simple forward-type Gaussian exposure and Brownian swap bridge exposure. We fix the CIR parameters $(\kappa,\beta,\eta,x)$ as in~\cite{BF17} and search for the jump parameters $(\omega,\alpha)$ of JCIR (affecting directly the intensity) and  of TC-CIR (affecting the stochastic clocks and only indirectly the jumps in the intensity) such that $\psi\geq 0$ and the WWR impact is maximum.
\subsubsection{Brownian exposure}
We set the coefficients of the exposure dynamics in eq. \eqref{eq5} to $b(V_t)\equiv0$, $\sigma(V_t)\equiv\sigma$ and $V_0=0$ which leads to
\begin{equation*}
dV_t=\sigma dW^V_t.
\end{equation*}
This example illustrates a $3$ years forward contract or total return swap, which does not pay dividends. In Figure~\ref{fig:FigA} we plot the CVA in function of the correlation $\rho$ for the tree considered models. We notice that due the the shift constraint, the JCIR model has slightly more WWR impact than the CIR model (because of the jumps) while the TC-CIR model has a larger WWR impact.
\begin{figure}[H]
\centering
\subfigure[CIR$\,(0.02,0.161,0.08,0.03),\,$JCIR$\,(0.07,0.08)$, TC-CIR$\,(0.6,0.512)$]{\includegraphics[width=0.48\columnwidth]{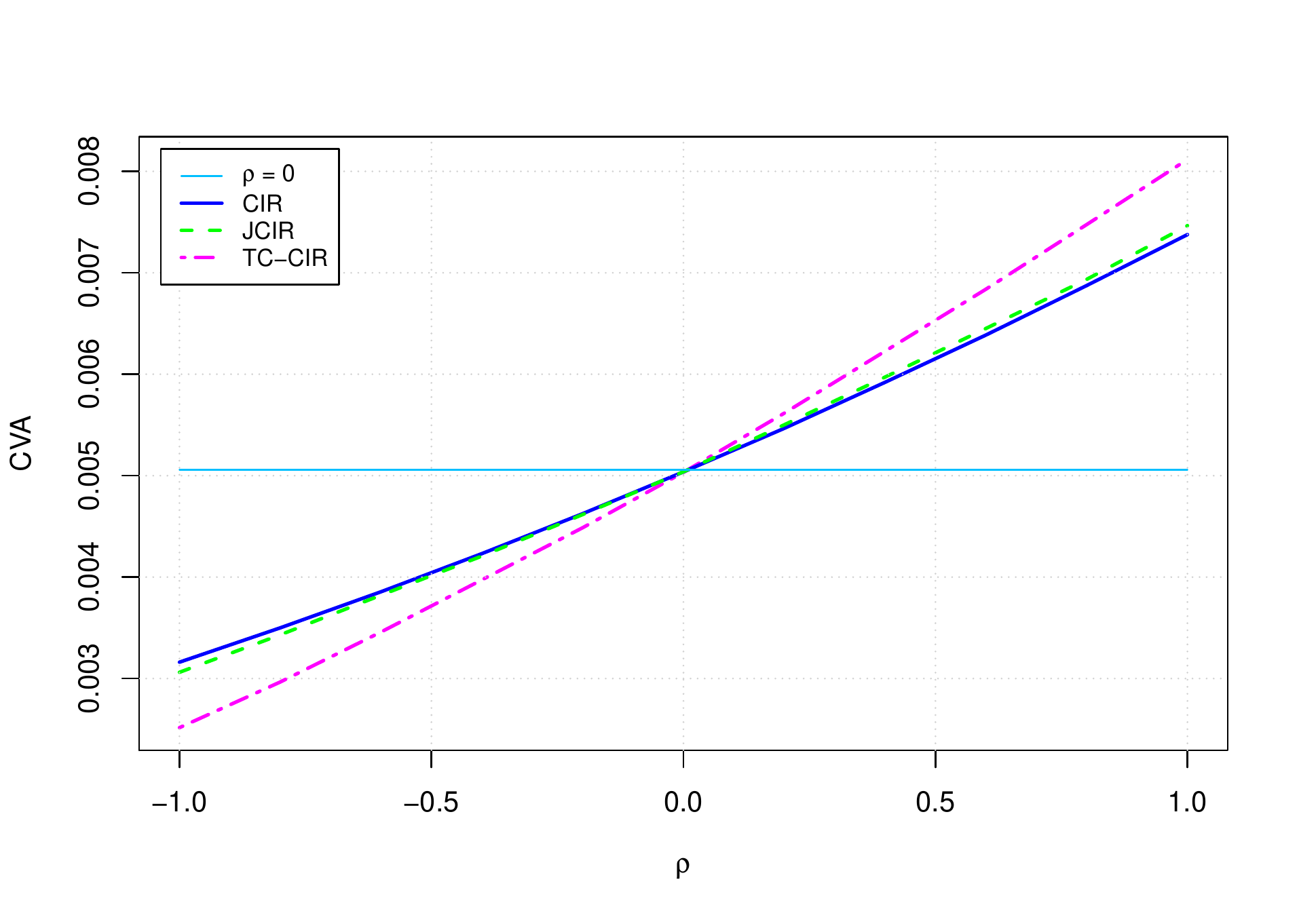}}\hspace{0.25cm}
\subfigure[CIR$\,(0.072,0.05,0.045,0.04),\,$JCIR$\,(0.07,0.05)$, TC-CIR$\,(0.4,0.49)$]{\includegraphics[width=0.48\columnwidth]{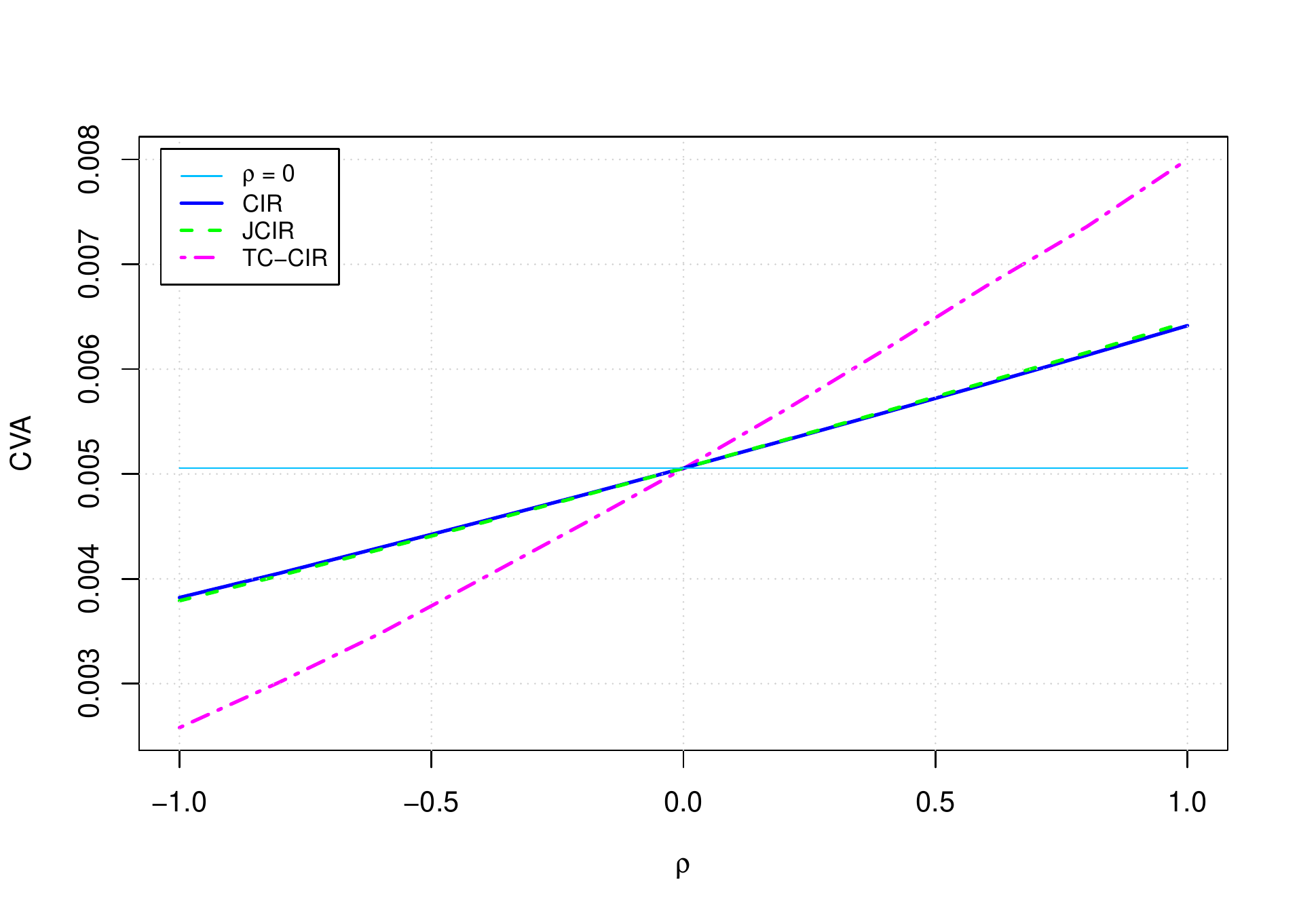}}
\caption{CVA figures, 3Y Gaussian exposure, $\sigma=8\%$. Hazard rate is $h(t)=5\%$.}\label{fig:FigA}
\end{figure}

\subsubsection{Brownian bridge (with drift) exposure}
The drifted Brownian bridge is obtained by choosing in \eqref{eq5} $b(V_t)\equiv\gamma(T-t)-\frac{V_t}{T-t}$, $\sigma(V_t)\equiv\sigma$ and $V_0=0$, so that
\begin{equation*}
dV_t=\left[\gamma(T-t)-\frac{V_t}{T-t}\right]dt + \sigma dW^V_t
\end{equation*}
where $\gamma$ stands for the future expected moneyness of an interest rate swap implied by the forward curve and $\sigma$ controls the exposure volatility.
We get similar results in term of WWR impact as in the previous example.
\begin{figure}[H]
\centering
\subfigure[CIR$\,(0.02,0.161,0.08,0.03),\,$JCIR$\,(0.07,0.08)$, TC-CIR $(0.6,0.512)$ ]{\includegraphics[width=0.48\columnwidth]{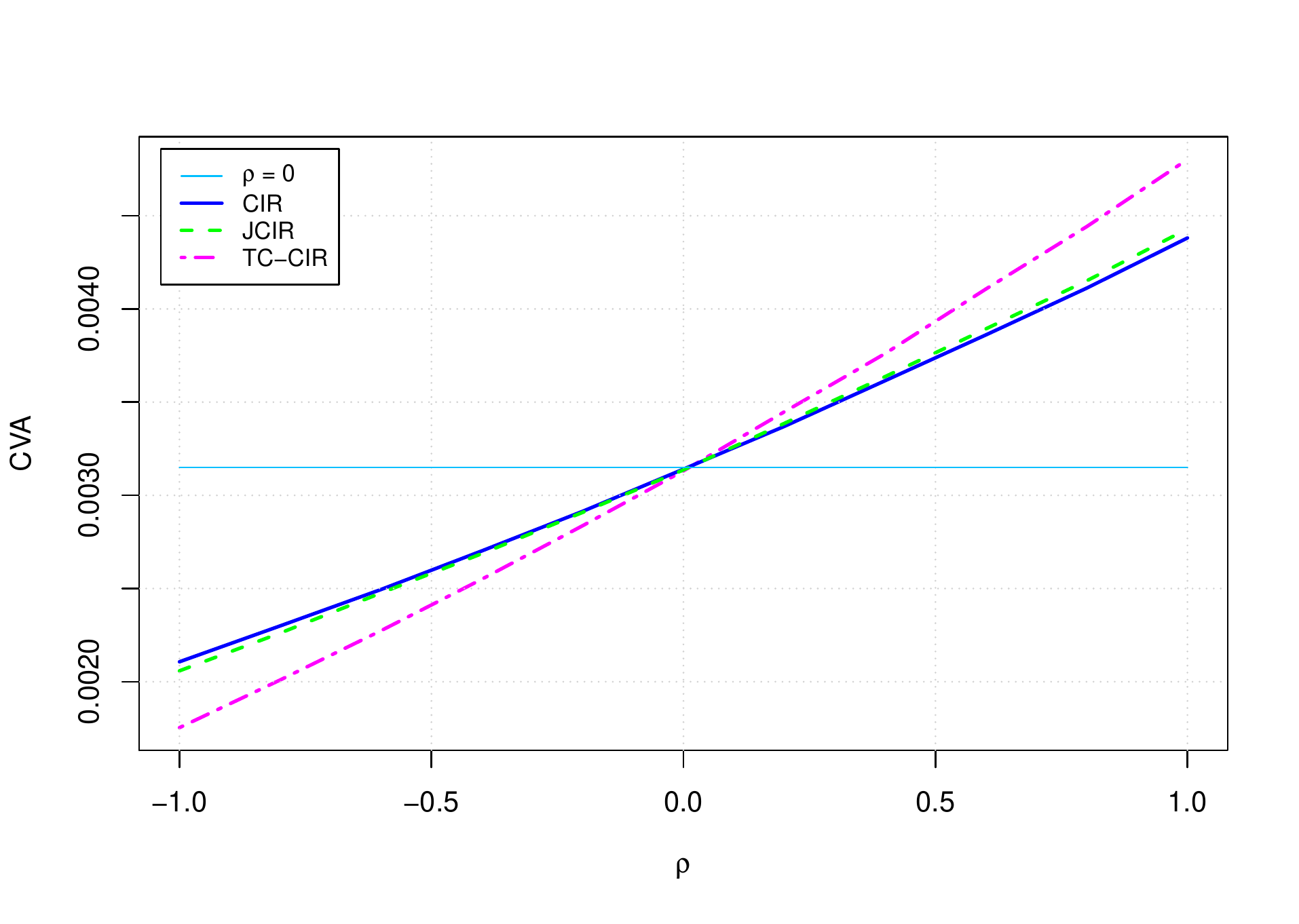}}\hspace{0.25cm}
\subfigure[CIR$\,(0.072,0.05,0.045,0.04),\,$JCIR$\,(0.07,0.05)$, TC-CIR  $(0.4,0.49)$]{\includegraphics[width=0.48\columnwidth]{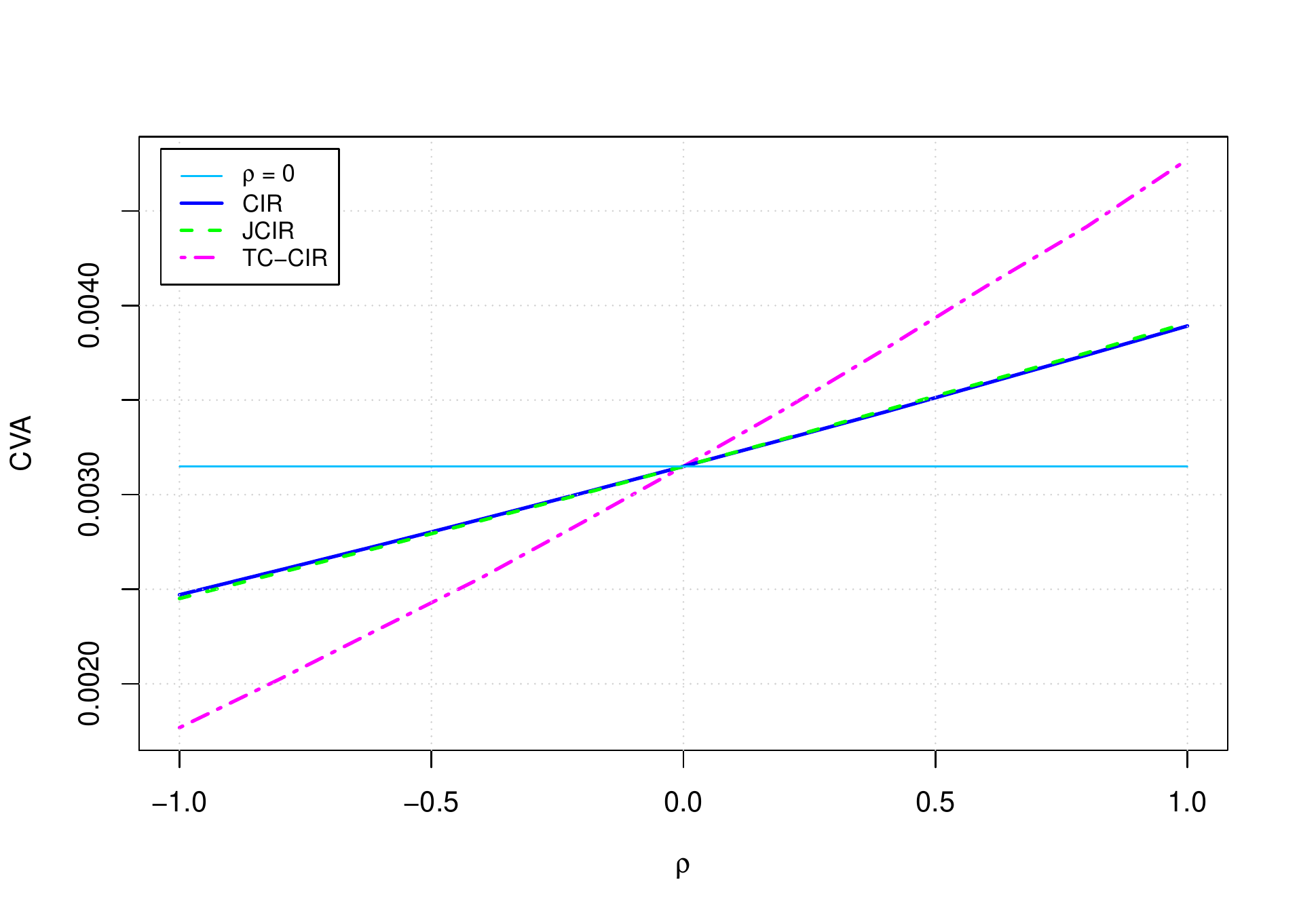}}
\caption{CVA figures, 3Y Swap exposure, $(\sigma, \gamma)=(8\%,0.1\%)$. Hazard rate is $h(t)=5\%$.}\label{fig:FigB}
\end{figure}

\subsection{Adaptive control variate Monte Carlo estimator}

The standard Monte Carlo method applied to the TC-CIR++ model is time consuming. The reason is that the time-step needs to be kept relatively small (to limit the discretization errors) whereas the simulation horizon is governed by $\theta_T$ which can be much larger than $T$. In order to reduce the computational cost, we propose to adopt a variance reduction technique called \textit{adaptive control variate}. In this section, we briefly recall the idea of control variate, describe its adaptive implementation and then transpose it to our CVA application.

Our purpose is to find an estimator of $\mathbb{E}[Y]$ from $m$ i.i.d. observations of $Y$. The unbiased sample-mean estimator of $\mathbb{E}[Y]$ is $\hat{Y}:=\frac{1}{m}\sum_{k=1}^m Y_k$. The idea of control variate consists of finding an alternative unbiased estimator $\tilde{Y}$ with lower variance compared to $\hat{Y}$ by using a control variate $Z$ with known expectation. Consider a generic pair $(Y,Z)$ of random variables with i.i.d copies $(Y_k,Z_k),$ $k\in\{1,2,\ldots,m\}$ and define
%
\begin{equation}
\label{eqBatch}
Y_k^\mu:=Y_k-\mu \Xi_k~~~,~~ \Xi_k:= Z_k- \mathbb{E}[Z]\;.
\end{equation}
The sample-mean estimator of $Y_k^\mu$ is an alternative unbiased estimator of $\mathbb{E}[Y]$, and is given by
\begin{equation*}
\hat{Y}^\mu = \frac{1}{m}\sum_{k=1}^m Y_k^\mu =\frac{1}{m}\sum_{k=1}^m(Y_k-\mu\Xi_k)\;.
\end{equation*}
Its variance is equal to that of $\hat{Y}$ for $\mu=0$ but is minimum for $\mu=\mu^*$ where
\begin{equation*}
\mu^* := \frac{{\rm Cov}(Y, \Xi)}{{\rm Var}(\Xi)}=\frac{\mathbb{E}[ Y \Xi]}{\mathbb{E}[\Xi^2]}\;. 
\end{equation*}
Because ${\rm Var}(\hat{Y}^{\mu^*})=(1-{\rm Corr}^2(Y, \Xi)){\rm Var}(\hat{Y})$, this approach is interesting when choosing the control variate $\Xi$ highly correlated with $Y$.

In practice however, the optimal constant $\mu^*$ needs to be itself estimated. The \textit{adaptive} control variate uses a different value for $\mu$ for every index $k\in\{1,2,\ldots,m\}$ :
\begin{equation*}
V_k:=\frac{1}{k}\sum_{i=1}^k\Xi_i^2,\quad C_k:=\frac{1}{k}\sum_{i=1}^kY_i\Xi_i \quad {\rm and} \quad \mu_k:=\frac{C_k}{V_k}\;,
\end{equation*}
where $\mu_0:=0$ and $\mu_{k-1}$ is the best estimator of $\mu^*$ at step $k$ (see~\cite{GP} for more details). Eventually, the {\it adaptive control variate} estimator of $\mathbb{E}[Y]$ is given by
\begin{equation*}
\tilde{Y}^\mu:=\frac{1}{m}\sum_{k=1}^mY_k^{\mu_{k-1}}=\hat{Y}-\frac{1}{m}\sum_{k=1}^m \mu_{k-1} Z_k +\frac{\mathbb{E}[Z]}{m}\sum_{k=1}^m \mu_{k-1}\;.
\end{equation*}

In our CVA application, we are interested in estimating $\mathrm{CVA}_0$ which is nothing but $\mathbb{E}[Y]$ with $Y = -\int_0^TV^+_u dS_u$ in the CIR and JCIR models (\ref{eq2}) or $Y = -\int_0^T\widetilde{V}^+_u dS^\theta_u$ in the TC-CIR model \eqref{eqCVATC}. We take as control variable $Z = -\int_0^TV^+_u dS^\perp_u$ or $Z = -\int_0^T\widetilde{V}^+_u dS^{\theta,\perp}_u$, respectively, where $S^\perp$ (resp. $S^{\theta,\perp}$) is the survival process $S$ (resp. $S^\theta$) associated to the intensity $\lambda$ (resp. $\lambda^\theta$) simulated using $W^\perp$ instead of $W^\lambda$.\footnote{Alternatively, if the trajectories of $V$ and $S$ are stored, it is enough to shuffle those of $V$ (or $S$), so as to combine, in the CVA computation, the $i$-th exposure's sample path with the $\pi(i)\neq i$ survival process' sample path.} In light of the above development, this choice is appealing because $Z$ is correlated with $Y$ (via the exposure process as well as the $W^\perp$ component of the survival process) whereas the expectation of $Z$ is known in closed form and corresponds to $\mathrm{CVA}^{\perp}_0$ given in \eqref{eqInd}. In the adaptive procedure, the $m$ pairs $(Y_k,Z_k)$ are given by integrals $(Y,Z)$ over the corresponding scenario. They are i.i.d. copies of $Y$ and $Z$ (up to discretization error resulting from the integral computation and simulation scheme). Eventually, our \textit{adaptive} CVA estimator for the CIR and JCIR models reads
\begin{equation}
\widetilde{\rm CVA}_0 = \widehat{\rm CVA}_0 - \frac{{\rm CVA}_0^\perp}{m}\sum_{k=1}^m \mu_{k-1} +\frac{1}{m}\sum_{k=1}^m\sum_{j=1}^n \mu_{k-1} V^{+,(k)}_{t_j}\Delta S^{(k)}_{t_j}
\end{equation}
with $\widehat{\rm CVA}_0$ given in \eqref{eq2bis} and \eqref{eq13}, respectively. The TC-CIR estimator takes a similar form provided that one replaces $(V,S)$ by $(\widetilde{V},S^\theta)$.

In figure~\ref{fig:FigC}, we compare the confidence interval at level $95\%$ of the CVA computation resulting from standard Monte Carlo (MC) and adaptive control variate (CV). We apply this technique for the three considered models (CIR, JCIR and TC-CIR) by using the same set of parameters in figure~\ref{fig:FigA} (a) with a Gaussian exposure. Similar results are detailed in the Appendix when using the same Gaussian exposure profile but with the parameters of figure~\ref{fig:FigA} (b). 
\begin{figure}[H]
\centering
\subfigure[CIR$\,(0.02,0.161,0.08,0.03)$]{\includegraphics[width=0.48\columnwidth]{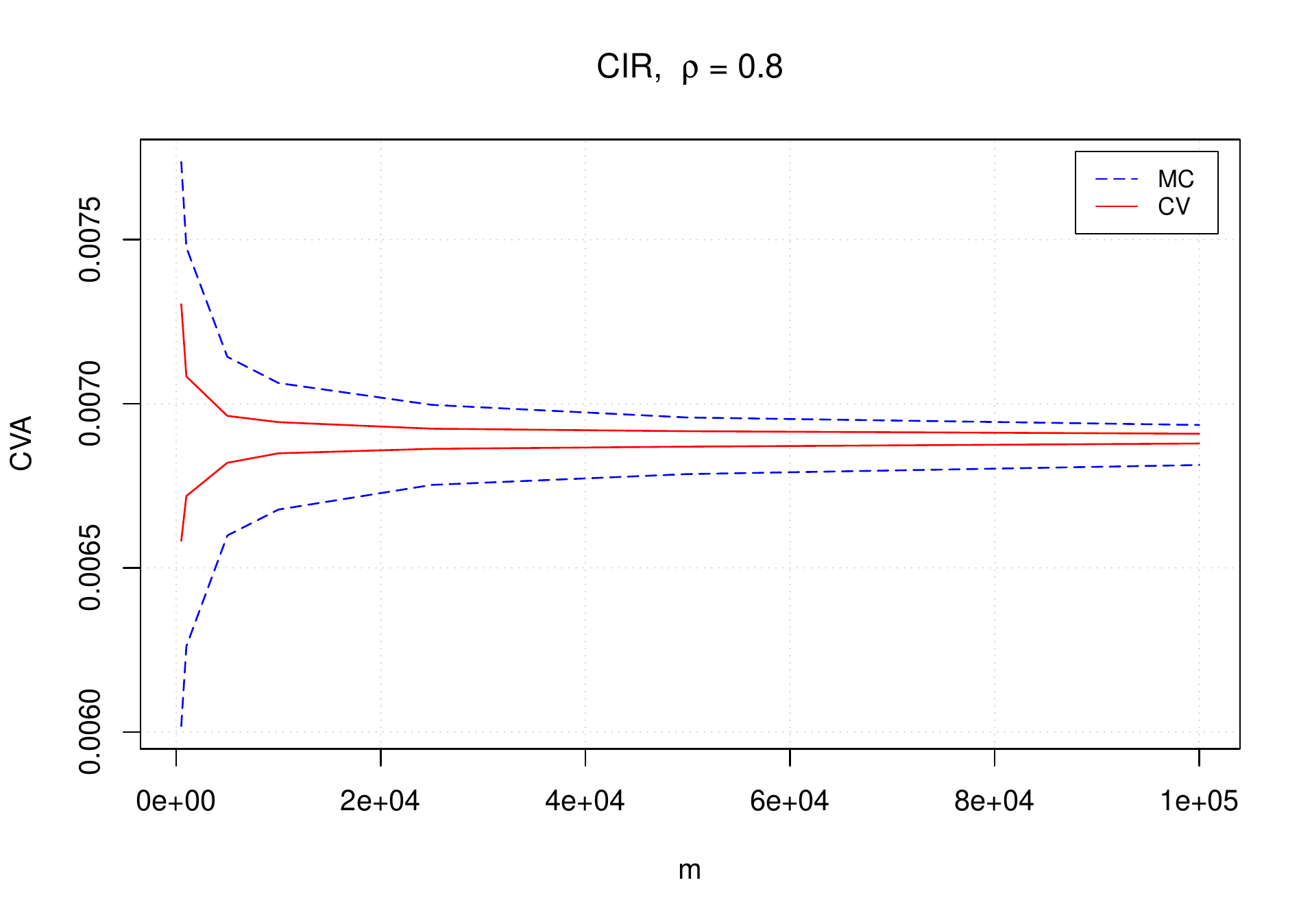}}\hspace{0.25cm}
\subfigure[CIR$\,(0.02,0.161,0.08,0.03)$]{\includegraphics[width=0.48\columnwidth]{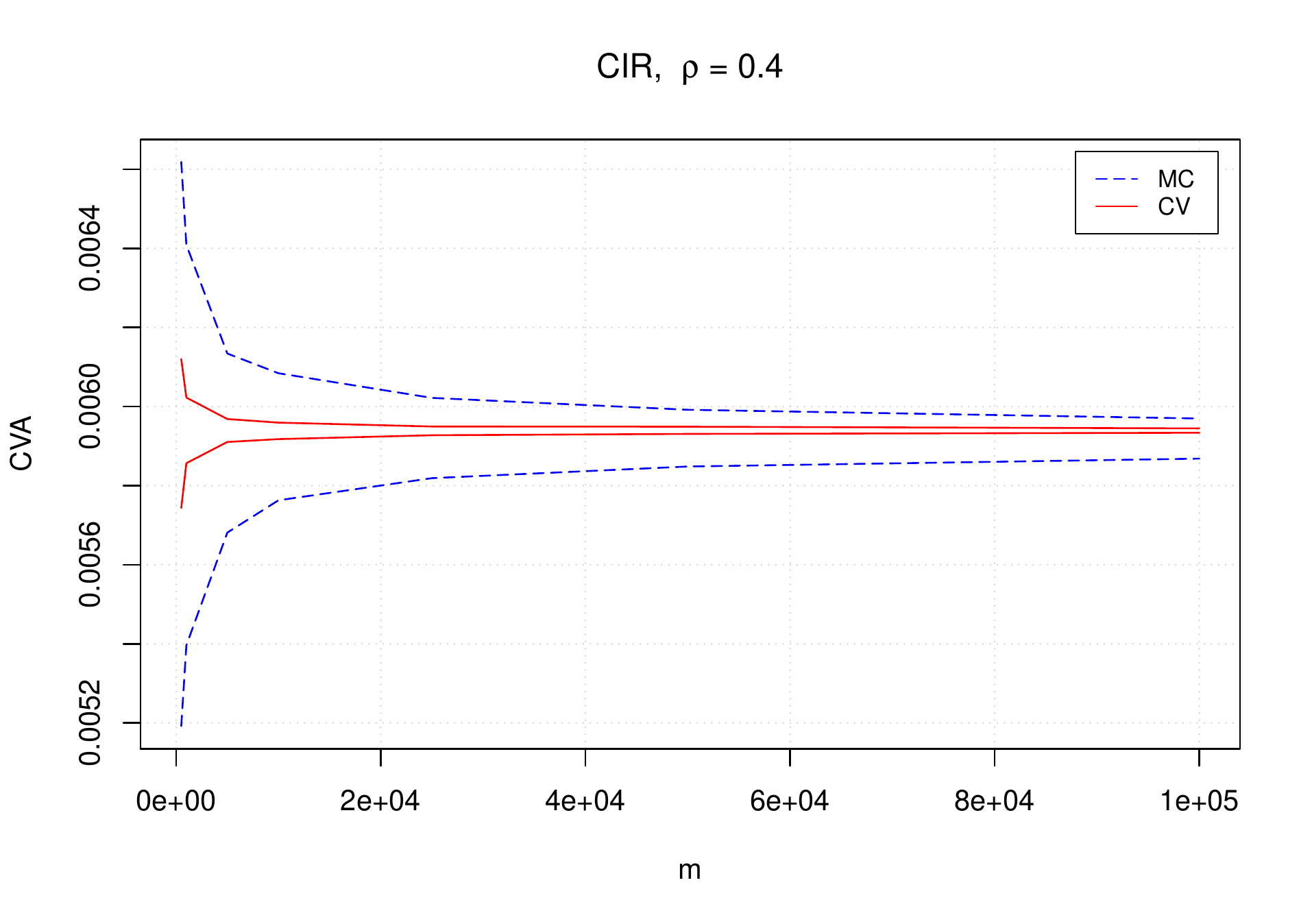}}
\subfigure[CIR$\,(0.02,0.161,0.08,0.03),\,$JCIR$\,(0.07,0.08)$]{\includegraphics[width=0.48\columnwidth]{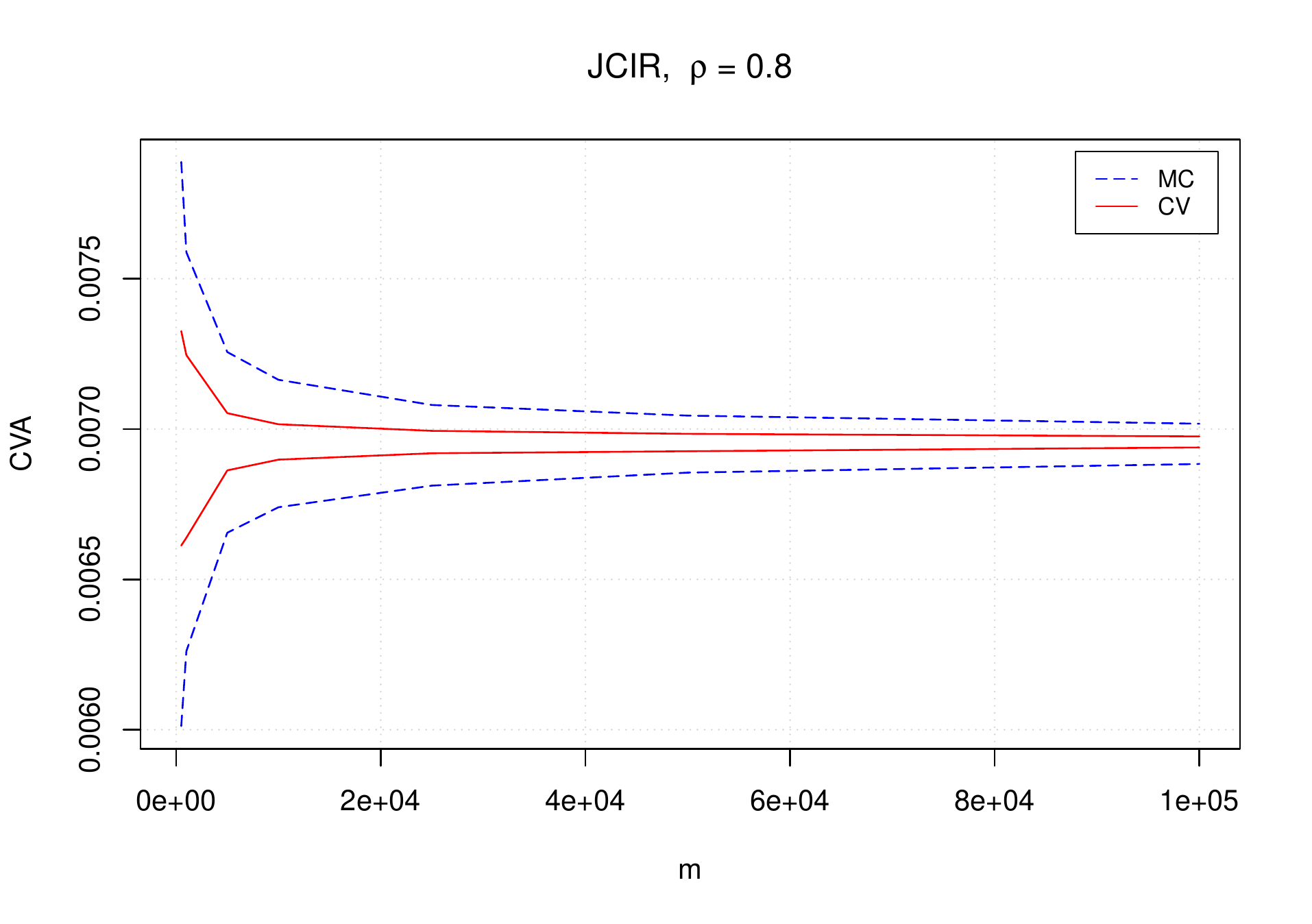}}\hspace{0.25cm}
\subfigure[CIR$\,(0.02,0.161,0.08,0.03),\,$JCIR$\,(0.07,0.08)$]{\includegraphics[width=0.48\columnwidth]{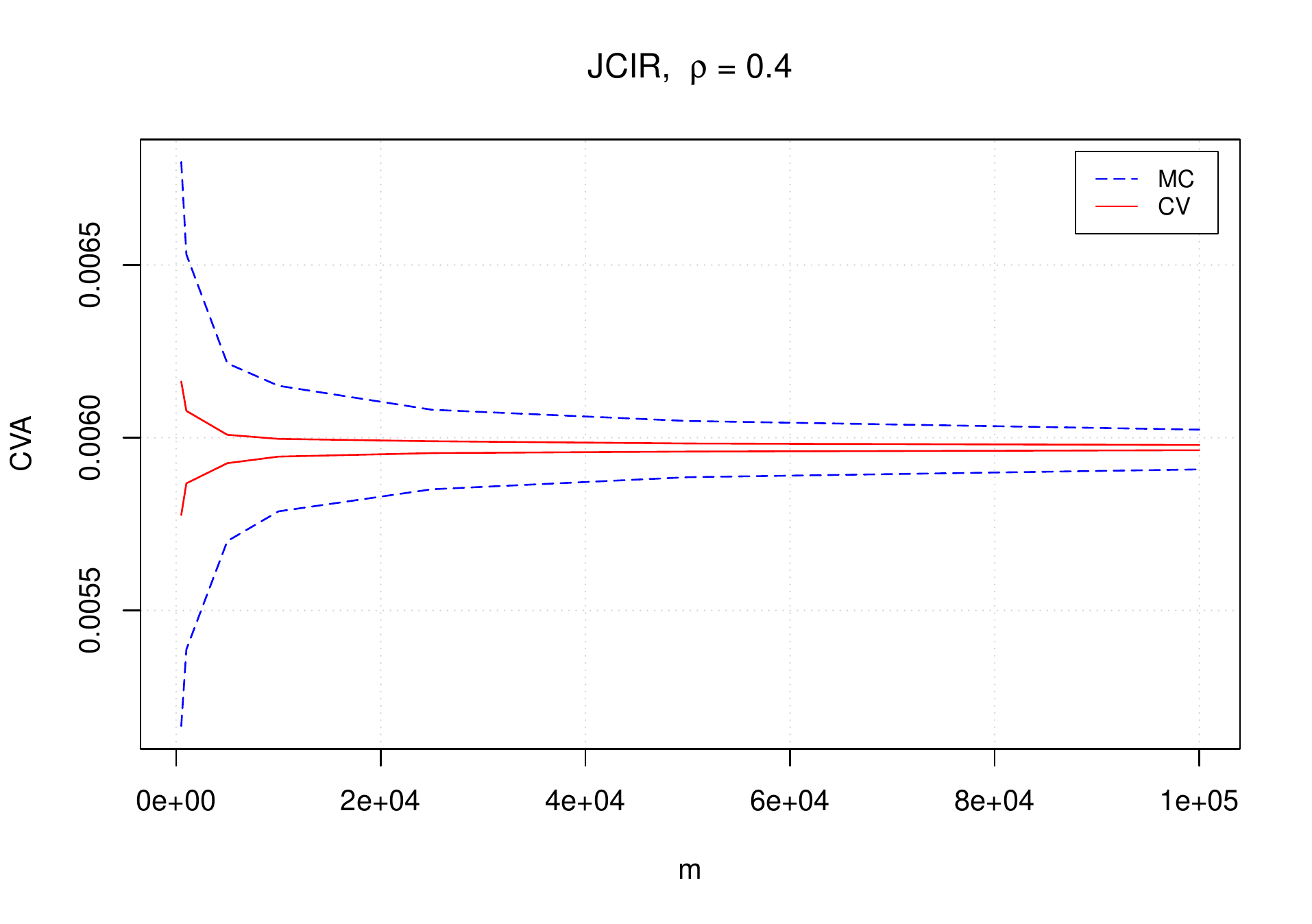}}
\subfigure[CIR$\,(0.02,0.161,0.08,0.03),\,$TC-CIR$\,(0.6,0.512)$]{\includegraphics[width=0.48\columnwidth]{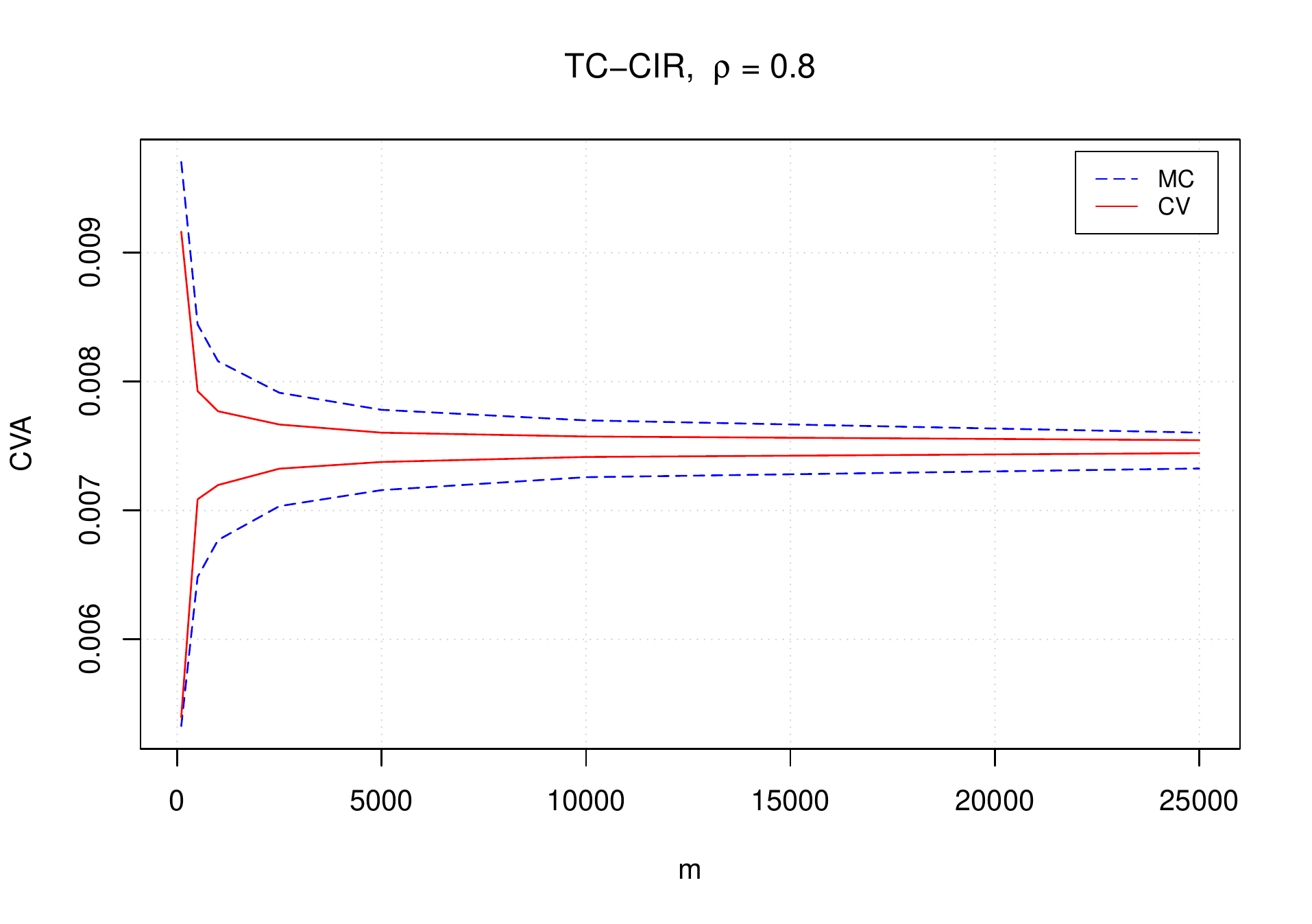}}\hspace{0.25cm}
\subfigure[CIR$\,(0.02,0.161,0.08,0.03),\,$TC-CIR$\,(0.6,0.512)$]{\includegraphics[width=0.48\columnwidth]{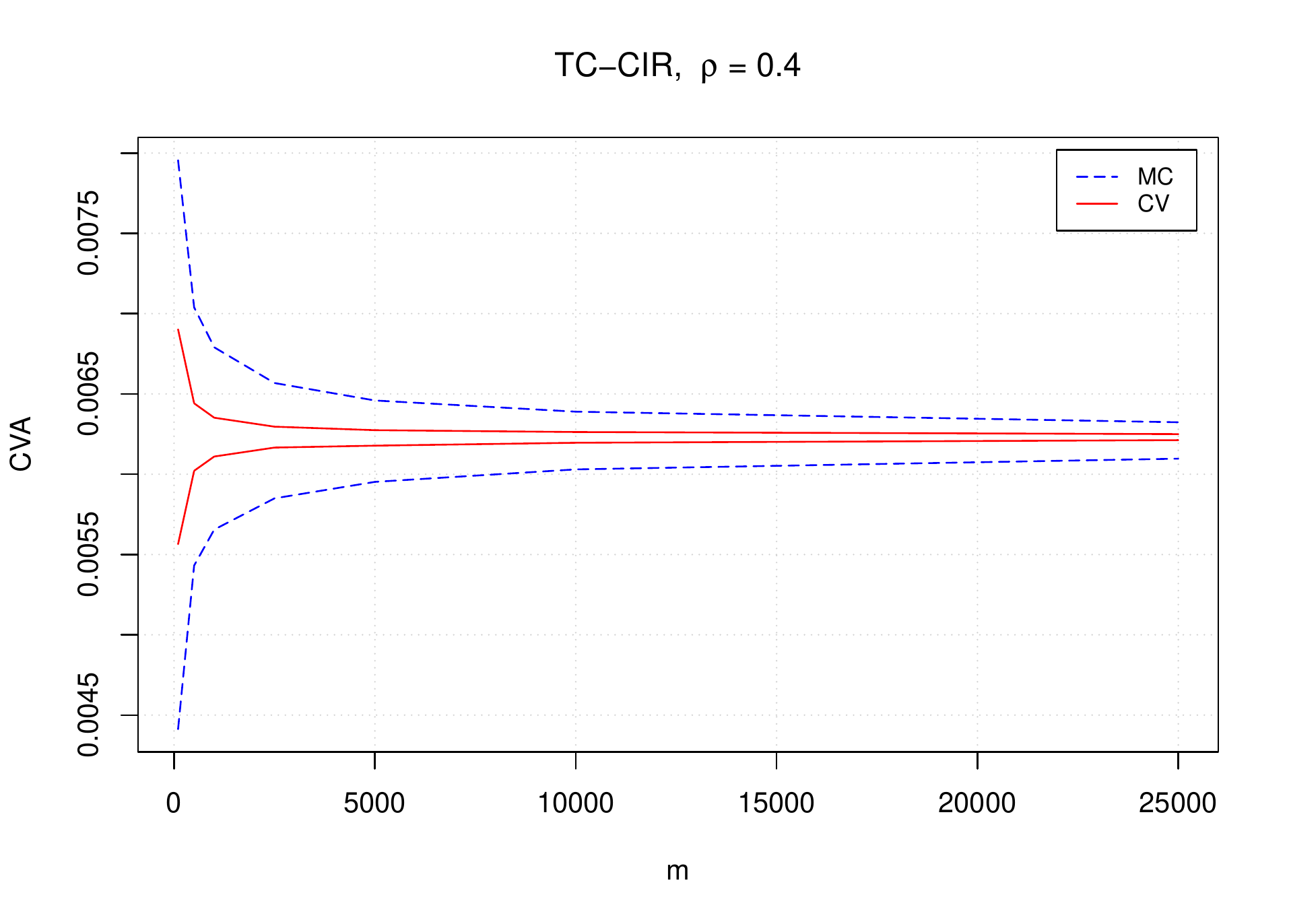}}
\caption{Control variate CVA figures, 3Y Gaussian exposure, $\sigma=8\%$. $h(t)=5\%$.}\label{fig:FigC}
\end{figure}
We observe clearly that the variance reduction technique adopted allows to reduce significantly the computational of the standard Monte Carlo as a solution to CVA computation in presence of WWR. Since $Y$ and and the chosen control $Z$ are more correlated as $|\rho|$ decreases, we observe that the variance is reduced again when $|\rho|$ is decreasing and the convergence of the \textit{adaptive} estimator is faster in this case.
\section{Conclusion}

Among the reduced-form intensity models, affine models like CIR++ process received much attention. The latter consists of a time-homogeneous mean-reverting square-root diffusion shifted in a deterministic way so as to fit a given probability term-structure. In order to increase the attainable volatilities, one can add jumps to the CIR++ dynamics. If the jumps are independent and positive, one obtains the so-called JCIR++ model, which remains affine. The problem however is that the model-implied survival probability curve decreases when increasing the activity of the jumps because they are one-sided. The calibration of the model curve to the market curve being achieved via the shift function, the latter decreases when increasing jumps’ activity. Consequently, in order to avoid facing “negative intensities”, one is limited in the activity of the jumps that can be used. For instance, this specificity limits the attainable values for value-at-risk on CDS, CDS options or wrong-way risk CVA.

An alternative intensity model that allows for two-side jumps is the time-changed idea of Mendoza-Arriaga \& Linetsky. Because jumps can be both positive and negative, the model-implied survival probability curve is expected to decrease less rapidly when increasing the jumps’ activity compared to the JCIR++. Hence, the problem of facing negative shift function (i.e. “negative intensities”) is expected to be less severe, so that larger ``volatility`s effect’’ could be generated. This motivates the use of the above model for CVA purposes. Yet, using the time-changed intensity approach in a multivariate framework requires specific precautions. Without specific adjustments indeed, the time-change technique partly destroys the potential correlation between intensity and exposure which impacts negatively the attainable WWR effect. More importantly, it features forward-looking effects that can generate arbitrage opportunities. In this paper, we have shown how the time-changed model can be used in a consistent and efficient way by reconstructing the exposure dynamics in a ``synchronous'' way such that the above problems can be avoided. The computational issue inherent to the time-changed technique is tackled by proposing a variance reduction technique based on adaptive control variate. Eventually, numerical simulations show that under calibration constraint to a given term structure, the time-change model can give larger WWR effects compared to the CIR++ and JCIR++ models. 

Analyzing the model’s ability to generate higher CDS spread's volatility is another important question from both risk-management and pricing perspectives, which is left for future work.

\section{Acknowledgments}
The research of Cheikh Mbaye is funded by the \textit{National Bank of Belgium} via an \textit{FSR} grant. We would like to thank M. Jeanblanc for discussion on the immersion property when dealing with jumps and time-changed processes. We are also so grateful to G. Pagès for discussions about the adaptive control variate method. Eventually, we thank Laura Ballotta for stimulating discussions about a previous version of this work.

The opinions expressed in this paper are those of the authors and do not necessarily
reflect the views of the \textit{National Bank of Belgium}.

\section{Appendix}
In this section, we give the proof of Corollary~\ref{cor1} and the variance reduction figures using the parameter set of figure~\ref{fig:FigA} (b).
\begin{proof}
Let $V_t=V_s\,e^{-\frac{\sigma^2}{2}(t-s)+\sigma(W^V_t-W^V_s)}$. Clearly, because $V$ is adapted to $\mathbb{F}^{W^V}$,
\begin{equation*}
\mathbb{E}\left[\left.V_t\,\right|\mathcal{F}^{W^V}_s\vee\mathcal{F}^{W^\lambda}_s\right]=\mathbb{E}\left[\left.V_t\,\right|\mathcal{F}^{W^V}_s\right]=V_s\,e^{-\frac{\sigma^2}{2} (t-s)}\mathbb{E}\left[e^{\sigma(W^V_t-W^V_s)}\right]=V_s
\end{equation*}
However, the increments of $W^V$ after $\theta_s>s$ are independent both from $\mathcal{F}^{W^V}_s$ and from $\mathcal{F}^{W^\lambda}_{\theta_s}:=\mathcal{F}^{W^{\lambda^\theta}}_{s}$. Hence,
\begin{eqnarray*}
\mathbb{E}\left[V_t\,|\,\mathcal{F}^{W^V}_s\vee\mathcal{F}^{W^{\lambda^\theta}}_s\right]
&=&V_s\,e^{-\frac{\sigma^2}{2}(t-s)}\mathbb{E}\left[e^{\sigma(W^V_t-W^V_s)} |\mathcal{F}^{W^\lambda}_{\theta_s}\right]\nonumber\\
&=&V_s\,e^{-\frac{\sigma^2}{2}(t-s)}e^{\frac{\sigma^2}{2}(t-\theta_s)}\mathbb{E}\left[e^{\sigma(W^V_{\theta_s}-W^V_s)} |\mathcal{F}^{W^\lambda}_{\theta_s}\right]\nonumber\\
&=&V_s\,e^{-\frac{\sigma^2}{2}(\theta_s-s)}\mathbb{E}\left[e^{\sigma(W^V_{\theta_s}-W^V_s)} |\mathcal{F}^{W^\lambda}_{\theta_s}\right]
\end{eqnarray*}
Assuming in the sequel that $\rho\neq 0$,
\begin{eqnarray*}
\mathbb{E}\left[e^{\frac{\sigma}{\rho}[(W^\lambda_{\theta_s}-W^\lambda_s)-\sqrt{1-\rho^2}(W^\perp_{\theta_s} - W^\perp_s)]} | \mathcal{F}^{W^\lambda}_{\theta_s}\right]
&=&e^{\frac{\sigma}{\rho}(W^\lambda_{\theta_s}-W^\lambda_s)}\mathbb{E}\left[e^{\frac{-\sigma\sqrt{1-\rho^2}}{\rho}(W^\perp_{\theta_s}-W^\perp_s)}| \mathcal{F}^{W^\lambda}_{\theta_s}\right]
\end{eqnarray*}
Observe that $W^\perp_{\theta_s}-W^\perp_s$ is not independent from $\mathcal{F}^{W^\lambda}_{\theta_s}$ as this information set gives us the value of $W^\lambda_{\theta_s}-W^\lambda_s$ :
\[
W^\lambda_{\theta_s}-W^\lambda_s=\rho\left(W^V_{\theta_s}-W^V_s\right)+\sqrt{1-\rho^2}\left(W^\perp_{\theta_s}-W^\perp_s\right)
\]
The computation of the above conditional expectation amounts to evaluate the moment generating function (MGF) $\varphi\left(-\frac{\sigma\sqrt{1-\rho^2}}{\rho}\right)$ associated to the Normal variable $W^\perp_{\theta_s}-W^\perp_s$ for which one knows the value of its weighted sum with another independent variable. More explicitly, we are looking for the MGF of $X=\sqrt{1-\rho^2}\sqrt{\theta_s-s}Z_1$ such that $X+Y= W^\lambda_{\theta_s}-W^\lambda_s$ with $Y=\rho\sqrt{\theta_s-s}Z_2$, $Z_1,Z_2$ iid standard Normal. It can be shown (see eg~\cite{FV17}) that, given $X+\omega_2 Z_2=c$, $X=\omega_1Z_1\sim\mathcal{N}(\tilde{\mu},\tilde{\sigma})$ with  $\tilde{\sigma}^2=\left(\omega_1^{-2}+\omega_2^{-2}\right)^{-1}$ and $\tilde{\mu}=c\tilde{\sigma}^2/\omega_2^2$. Using the values of $\omega_1,\omega_2$ and $c= W^\lambda_{\theta_s}-W^\lambda_s$,  
 \[
\varphi\left(t\right)=e^{(1-\rho^2)( W^\lambda_{\theta_s}-W^\lambda_s)t+\rho^2(1-\rho^2)(\theta_s-s)\frac{t^2}{2}}
\]
Eventually,
\begin{eqnarray*}
\mathbb{E}\left[V_t\,|\,\mathcal{F}^{W^V}_s\vee\mathcal{F}^{W^{\lambda^\theta}}_s\right]
&=&V_s\,e^{-\frac{\sigma^2}{2}(\theta_s-s)}e^{\frac{\sigma}{\rho}(W^\lambda_{\theta_s}-W^\lambda_s)}e^{(1-\rho^2)( W^\lambda_{\theta_s}-W^\lambda_s) \left(-\frac{\sigma\sqrt{1-\rho^2}}{\rho}\right)}\\
&\,&\qquad\times\,e^{\rho^2(1-\rho^2)(\theta_s-s)\left(-\frac{\sigma\sqrt{1-\rho^2}}{\rho\sqrt{2}}\right)^2}\nonumber\\
&=&V_s\,e^{\frac{\sigma}{\rho}\left[1-(1-\rho^2)^{\frac{3}{2}}\right](W^\lambda_{\theta_s}-W^\lambda_s)+\frac{\sigma^2}{2}\left[(1-\rho^2)^2-1\right](\theta_s-s)}\nonumber\\
&\neq &V_s\;.
\end{eqnarray*}
\end{proof}
\newpage
\begin{figure}[H]
\centering
\subfigure[CIR$\,(0.072,0.05,0.045,0.04)$]{\includegraphics[width=0.48\columnwidth]{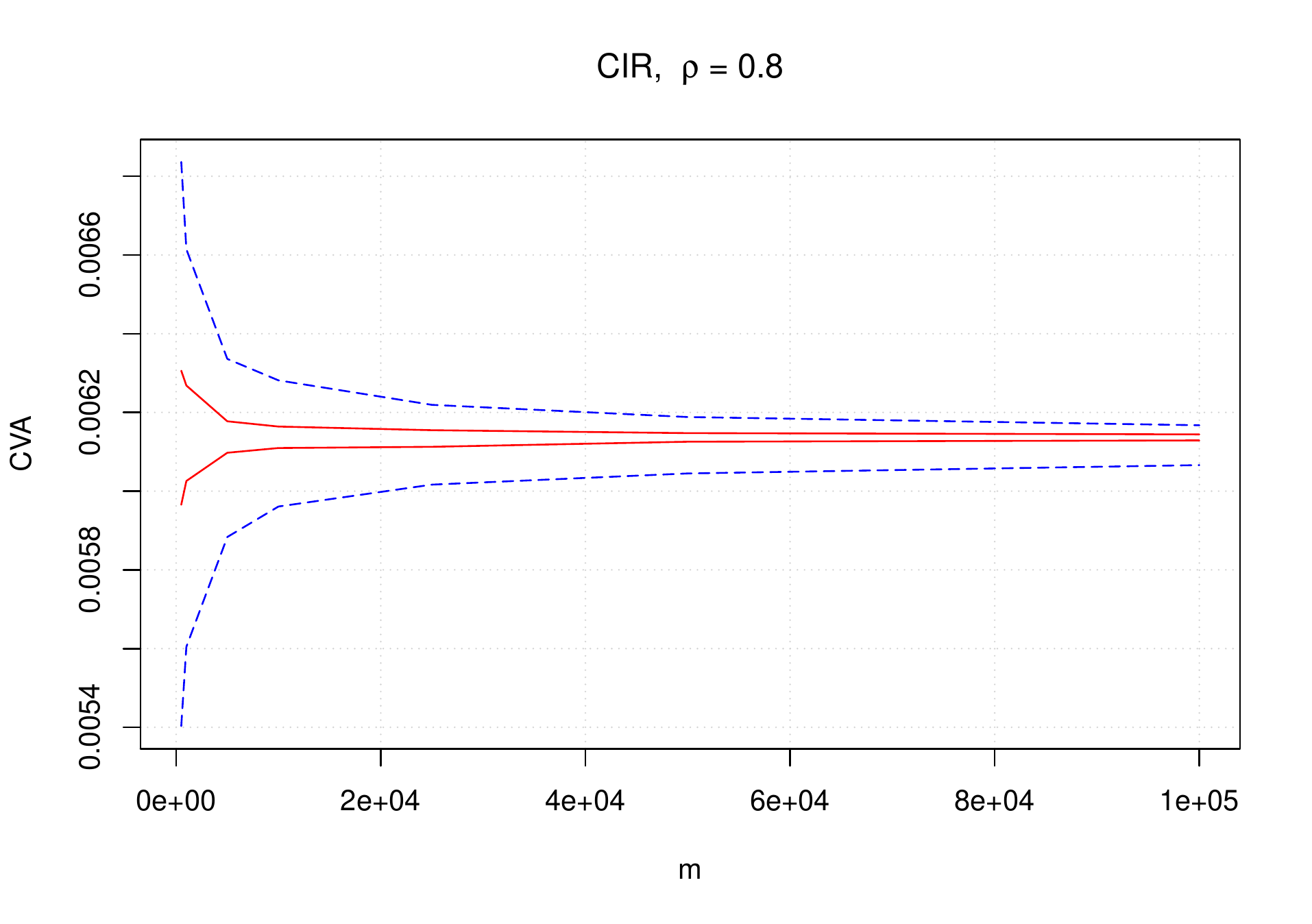}}\hspace{0.25cm}
\subfigure[CIR$\,(0.072,0.05,0.045,0.04)$]{\includegraphics[width=0.48\columnwidth]{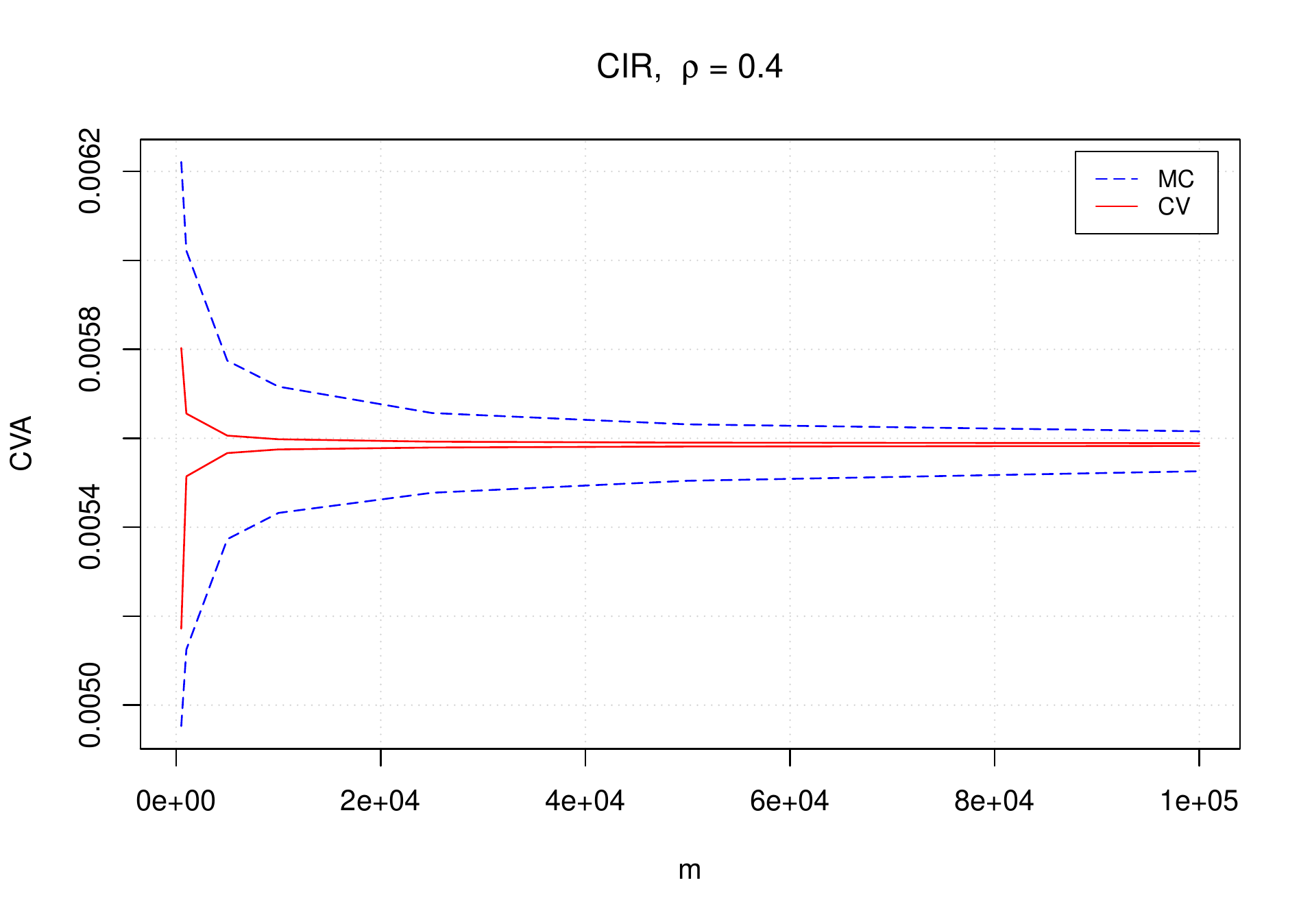}}
\subfigure[CIR$\,(0.072,0.05,0.045,0.04),\,$JCIR$\,(0.07,0.05)$]{\includegraphics[width=0.48\columnwidth]{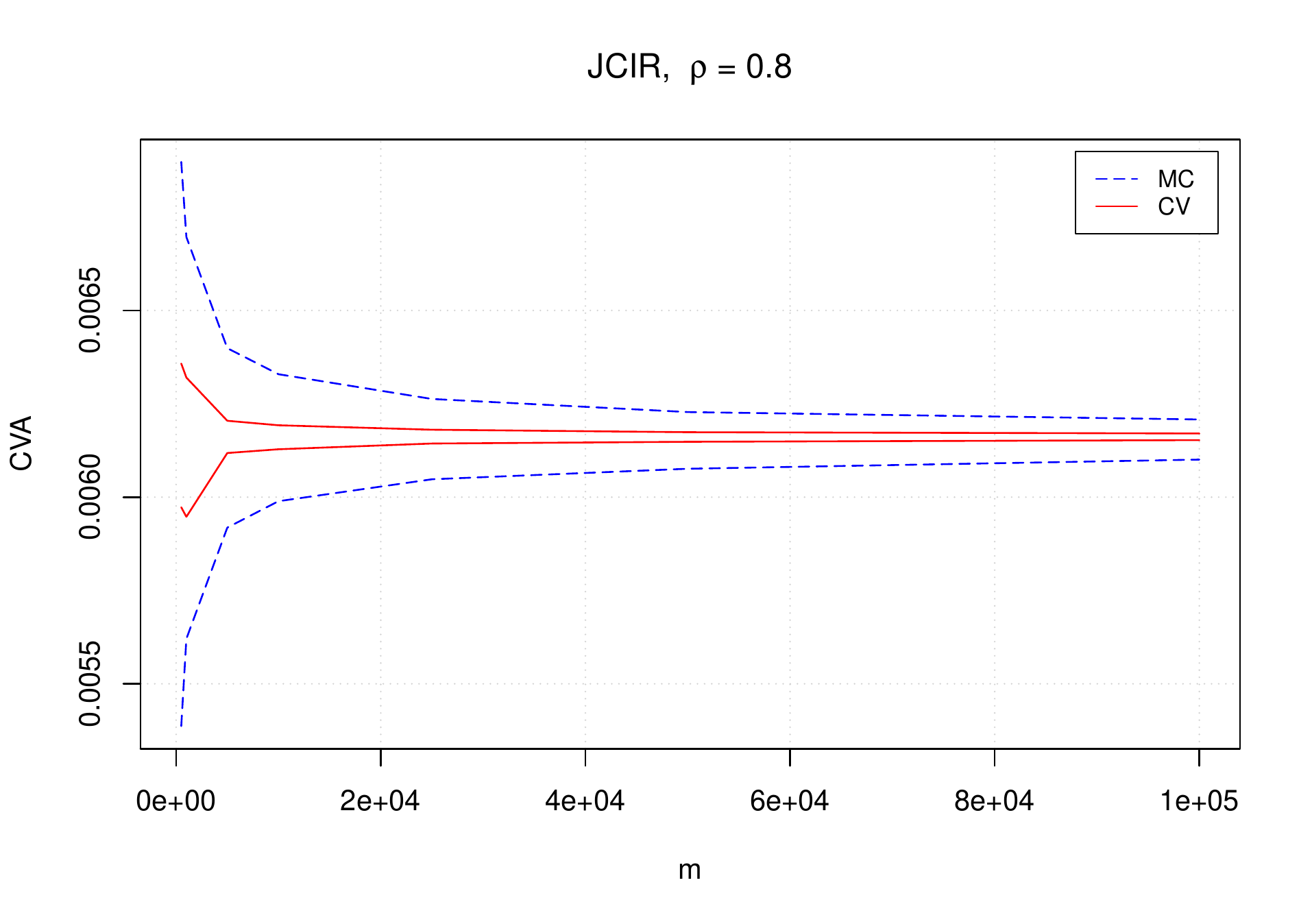}}\hspace{0.25cm}
\subfigure[CIR$\,(0.072,0.05,0.045,0.04)),\,$JCIR$\,(0.07,0.05)$]{\includegraphics[width=0.48\columnwidth]{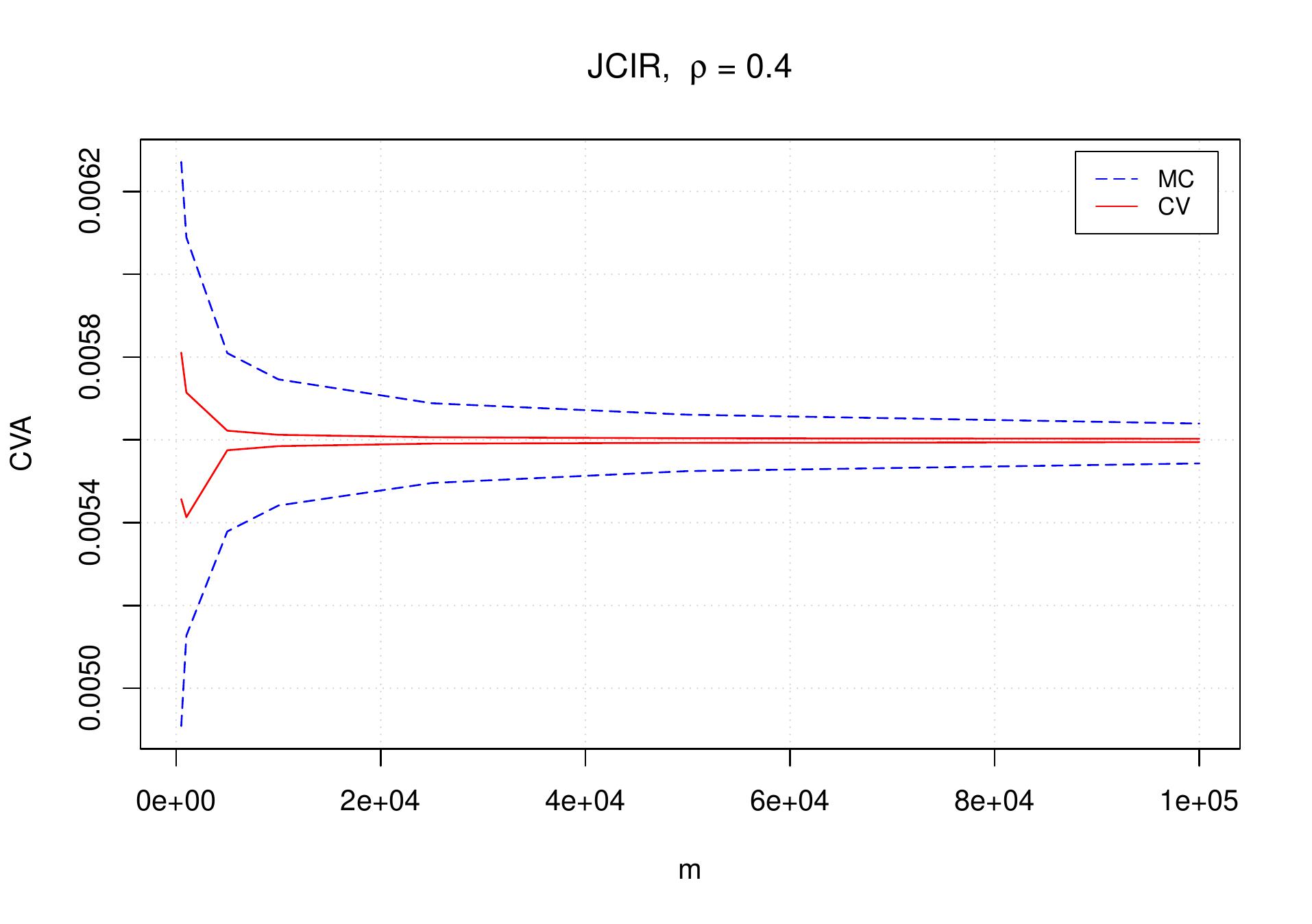}}
\subfigure[CIR$\,(0.072,0.05,0.045,0.04),\,$TC-CIR$\,(0.4,0.49)$]{\includegraphics[width=0.48\columnwidth]{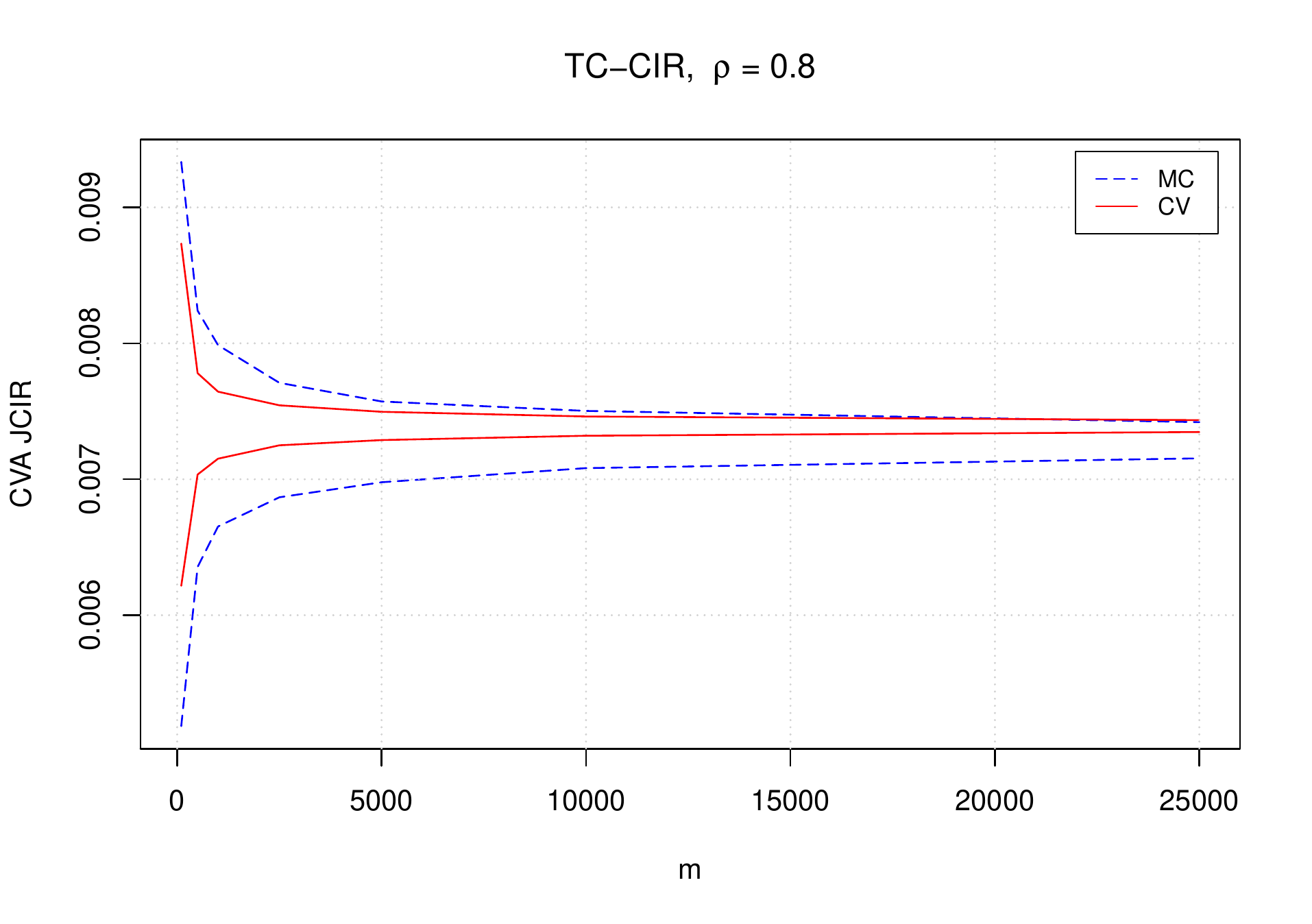}}\hspace{0.25cm}
\subfigure[CIR$\,(0.072,0.05,0.045,0.04),\,$TC-CIR$\,(0.4,0.49)$]{\includegraphics[width=0.48\columnwidth]{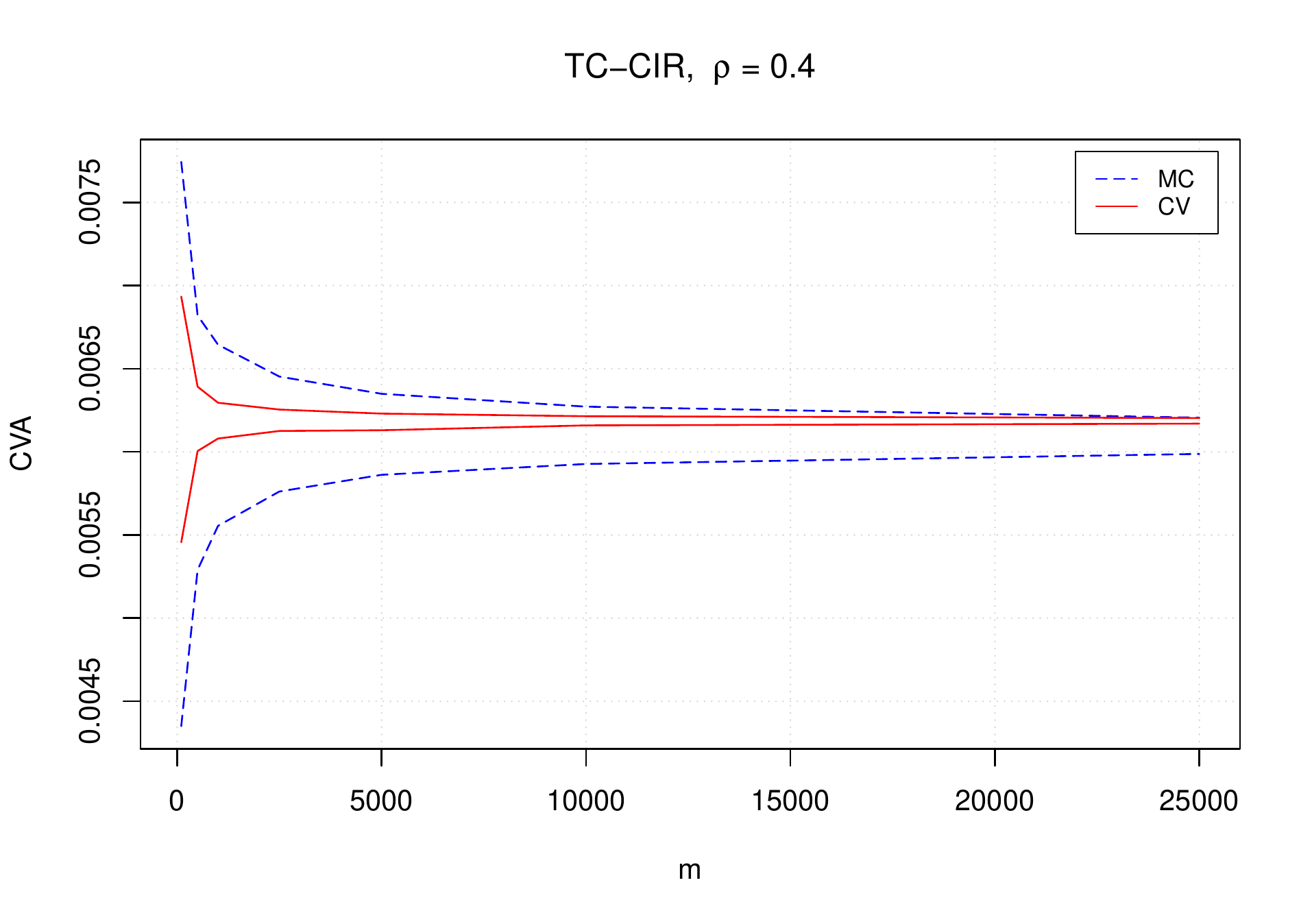}}
\caption{Control variate CVA figures, 3Y Gaussian exposure, $\sigma=8\%$. $h(t)=5\%$.}\label{fig:FigD}
\end{figure}
\newpage
\bibliographystyle{plain}
\bibliography{biblio}
\addcontentsline{toc}{chapter}{References}
\end{document}